\newtheorem{theorem}{Theorem}[section]
\newtheorem{lemma}[theorem]{Lemma}
\newtheorem{definition}[theorem]{Definition}
\newtheorem{assumption}{Assumption}[section]
\newcommand{\R}{\mathbb{R}}
\def\ket#1{\mathinner{|{#1}\rangle}}
\renewcommand{\part}[2]{\frac{\partial #1}{\partial #2}}
\newcommand{\all}[2]{\begin{align}\label{#2} #1\end{align}}
\newcommand{\al}[1]{\begin{align} #1\end{align}}
\newcommand{\en}[1]{\left ( #1 \right )}
\newcommand{\enc}[1]{\left [ #1 \right ]}
\newcommand{\enp}[1]{\left \{ #1 \right \}}
\newcommand{\poly}[1]{O(\mathrm{poly} (#1))}
\newcommand{\polylog}{ \mathrm{polylog}}
\newcommand{\norm}[1]{\lVert#1\rVert}
\newcommand{\abs}[1]{\left\lvert #1\right\rvert}
\newcommand{\thmref}[1]{\hyperref[#1]{{Theorem~\ref*{#1}}}}
\newcommand{\lemref}[1]{\hyperref[#1]{{Lemma ~\ref*{#1}}}}
\newcommand{\remref}[1]{\hyperref[#1]{{Remark~\ref*{#1}}}}
\newcommand{\corref}[1]{\hyperref[#1]{{Corollary~\ref*{#1}}}}
\newcommand{\eqnref}[1]{\hyperref[#1]{{Equation~(\ref*{#1})}}}
\newcommand{\claimref}[1]{\hyperref[#1]{{Claim~\ref*{#1}}}}
\newcommand{\remarkref}[1]{\hyperref[#1]{{Remark~\ref*{#1}}}}
\newcommand{\propref}[1]{\hyperref[#1]{{Proposition~\ref*{#1}}}}
\newcommand{\factref}[1]{\hyperref[#1]{{Fact~\ref*{#1}}}}
\newcommand{\defref}[1]{\hyperref[#1]{{Definition~\ref*{#1}}}}
\newcommand{\exampleref}[1]{\hyperref[#1]{{Example~\ref*{#1}}}}
\newcommand{\hypref}[1]{\hyperref[#1]{{Hypothesis~\ref*{#1}}}}
\newcommand{\secref}[1]{\hyperref[#1]{{Section~\ref*{#1}}}}
\newcommand{\chapref}[1]{\hyperref[#1]{{Chapter~\ref*{#1}}}}
\newcommand{\apref}[1]{\hyperref[#1]{{Appendix~\ref*{#1}}}}
\title{Quantum option pricing via the Karhunen-Lo\`{e}ve expansion}
\date{\today}
\begin{document}

\author[1]{Anupam Prakash$^\dagger$}

\author[2]{Yue Sun$^\dagger$$^*$}

\author[2]{Shouvanik Chakrabarti}
\author[3]{Charlie Che}
\author[1]{\\Aditi Dandapani}
\author[2]{Dylan Herman}
\author[2]{Niraj Kumar}
\author[2]{Shree Hari Sureshbabu}
\author[4]{\\Ben Wood}

\author[5]{Iordanis Kerenidis$^\ddagger$}

\author[2]{Marco Pistoia$^\ddagger$}

\affil[1]{QC Ware, Palo Alto, USA}
\affil[2]{Global Technology Applied Research, JPMorgan Chase, New York, NY}
\affil[5]{QC Ware, Palo Alto, CA and Paris, France}
\affil[3]{Quantitative Research, JPMorgan Chase, New York, NY}
\affil[4]{Quantitative Research, JPMorgan Chase, London, UK}

\clearpage\maketitle
\thispagestyle{empty}

\def\thefootnote{$\dagger$}\footnotetext{These authors contributed equally to this work.}
\def\thefootnote{$\ddagger$}\footnotetext{Principal investigators.}
\def\thefootnote{$*$}\footnotetext{Email: yue.sun@jpmorgan.com}

\begin{abstract} 
We consider the problem of pricing discretely monitored Asian options over $T$ monitoring points where the underlying asset is modeled by a geometric Brownian motion. We provide two quantum algorithms with complexity poly-logarithmic in $T$ and polynomial in $1/\epsilon$, where $\epsilon$ is the additive approximation error. Our algorithms are obtained respectively by using an $O(\log T)$-qubit semi-digital quantum encoding of the Brownian motion that allows for exponentiation of the stochastic process and by analyzing classical Monte Carlo algorithms inspired by the semi-digital encodings. The best quantum algorithm obtained using this approach has complexity $\widetilde{O}(1/\epsilon^{3})$ where the $\widetilde{O}$ suppresses factors poly-logarithmic in $T$ and $1/\epsilon$. The methods proposed in this work generalize to pricing options where the underlying asset price is modeled by a smooth function of a sub-Gaussian process and the payoff is dependent on the weighted time-average of the underlying asset price. 
\end{abstract} 

\section{Introduction}

Option pricing is a fundamental problem in quantitative finance. The objective of option pricing is to determine fair prices for options. Options are financial instruments that provide the possibility to buy or sell an underlying asset at a predetermined price within a specified time horizon. Two techniques are central to option pricing. The first one is the use of stochastic models for the underlying asset price with the goal of accounting for uncertainty in asset prices over time.  The second one is Monte Carlo sampling, which is a powerful computational technique to simulate various future price scenarios. Both these aspects of option pricing, stochastic models and Monte Carlo sampling, have been studied extensively in the quantum setting. 

Quantum accelerated Monte Carlo methods~\cite{M15}, commonly known as quantum Monte Carlo integration (QMCI), are based on amplitude estimation (AE)~\cite{BHMT02}. AE is a fundamental quantum primitive to achieve quadratic speedups for a variety of statistical estimation problems. In QMCI, the quantum algorithm has access to a function $f: X \to \R$ and the goal is to estimate the expected value of $f$ to some additive error $\epsilon$. 
Montanaro \cite{M15} provided an algorithm to estimate the expected value of $f$ for arbitrary distributions on $X$, where the standard deviation of $f$ is bounded by $\sigma$,  using only $\tilde{O}(\sigma/\epsilon)$ calls to the oracle for $f$. This achieves a quadratic speedup over classical algorithms that require $O(\sigma^{2}/\epsilon^{2})$ calls. 

Amplitude estimation requires repeated evaluation of $f$ in the quantum circuit, motivating a series of recent works aiming to reduce the resource requirements of the algorithm. 
Suzuki \emph{et al.}~\cite{S20} used maximum likelihood estimation to eliminate the quantum Fourier transform (QFT) step in the amplitude estimation algorithm. Aaronson and Rall \cite{AR20} gave another QFT-free algorithm with rigorous guarantees and further work by Grinko~\emph{et al.}~\cite{grinko2019iterative} reduced the constant overheads for the QFT-free amplitude estimation. Lastly, Giurgica-Tiron~\emph{et al.}~\cite{GKLPZ20} 
reduced the circuit depth for AE while still retaining a partial speedup by providing AE algorithms with depth $O(1/\epsilon^{1-\beta})$ and $O(1/\epsilon^{1+\beta})$ queries, thus interpolating between classical sampling and the standard AE algorithms.  

Motivated by the provable speedup in query complexity promised by QMCI, quantum algorithms using QMCI for various types of option pricing and risk analysis problems that traditionally rely on classical Monte Carlo methods have been proposed~\cite{Stamatopoulos_2020,unaryoptionpricing2021,Woerner_2019}.
Application-specific resource analysis has also been performed for the pricing of various types of financial derivatives, such as European and Asian options, autocallables and target accrual redemption forwards (TARFs)~\cite{Chakrabarti_2021}.
For a more comprehensive review on quantum algorithms for option pricing, we refer the reader to~\cite{herman2023quantum}.

To use QMCI for pricing an option, we need to be able to simulate the underlying stochastic process in quantum superposition. That is, we need to prepare a quantum state that coherently encodes randomness and then use the randomness to generate sample paths of the stochastic process under consideration. 
While in principle this can always be done in the same amount of time by quantizing the classical operations, in practice this can result in prohibitively large quantum circuits. Montanaro~\cite{M15} noted that quantum walk methods can provide a quadratic speedup in certain cases and used this to show a quantum algorithm for estimating the partition function of the Ising model. Detailed resource estimates for QMCI methods applied to the option pricing problem are provided in~\cite{Chakrabarti_2021}. 

Prior works on option pricing using QMCI~\cite{Chakrabarti_2021,M15,Stamatopoulos_2020,unaryoptionpricing2021,Woerner_2019} have been almost exclusively based on the \emph{digital encoding} of the stochastic processes, in which values of the random variables in the stochastic process are encoded as computational basis states in a quantum register, and their probabilities are encoded in the amplitudes of the corresponding states.
The digital encoding allows for more computational freedom and flexibility, as any classical operation on classical bits can be accordingly performed on qubits. 
This means that classical algorithms for generating sample paths of the stochastic process and any further transformations of the stochastic process, such as evaluating a payoff function on the sample paths, can be easily ported to quantum algorithms using, e.g., quantum logical operations and quantum coherent arithmetic applied on sample paths in superposition.
On the other hand, however, the digital encoding does not a priori offer any quantum advantage over classical simulation of stochastic processes 
as it emulates the classical algorithm for generating the sample paths of the stochastic process.
This means that the only source of quantum speedup is the final amplitude estimation to estimate the mean.

A different approach was proposed recently by Bouland~\emph{et al.}~\cite{BDP23}, using an ``analog'' quantum representation of stochastic processes where the value of the process at time $t$ is stored in the amplitude of the quantum state. This enables an exponentially more efficient encoding of sample paths in terms of the number of time points. Nevertheless, the fact that the stochastic process values are encoded in the amplitude restricts the type of operations that can be efficiently applied to them.
In particular, compared to digitally encoded stochastic processes, it is much more difficult to apply logical and arithmetic operations to the sample paths with the analog encoding.
This means that the types of option payoff functions that can be implemented using the analog encoding are greatly limited. 

\subsection{Our results and techniques}
In this work, we propose a new type of encoding for stochastic processes that combines the compactness of representation of the analog encoding with the ease of computational manipulation of the digital encoding.
We call this new encoding the \emph{semi-digital encoding}.
Specifically, in the semi-digital encoding, time points along a sample path of the stochastic process are stored in superposition, while the values of the sample path are represented by computational basis states in a separate quantum register, making it accessible for computation. 

Using the semi-digital encoding and an analysis of the truncation error of the Karhunen-Lo\`{e}ve (KL) expansion, we show how to efficiently price discretely monitored Asian options where the underlying asset follows a geometric Brownian motion. We propose four different quantum and classical algorithms for pricing the discretely monitored Asian option using KL expansion. These algorithms are summarized in Table~\ref{table1} along with the standard quantum and classical MCI algorithms for comparison. 

Our study shows that using KL expansion and time-domain sub-sampling can reduce the complexity of payoff evaluation when the number of time points $T$ used in the payoff function is sufficiently large. If the stochastic process is a geometric Brownian motion, speedups are achieved in the regime $T \gg 1/\epsilon^{2}$. We demonstrate the quantum and classical algorithms with the Asian option pricing problem on a stochastic process modeled by geometric Brownian motion, but the methods developed generalize to pricing over exponentiated Gaussian process and other smooth functions of sub-Gaussian processes. 
For stochastic processes with fairly general smoothness assumptions, we show that the time-domain sub-sampling approach can be more efficient than the semi-digital encoding based approach.

The main conceptual tool in our work is the relation between semi-digital encodings of stochastic processes and Gaussian state preparation that arises through the KL expansion of Gaussian processes. The KL expansion can be viewed as an expansion of the stochastic process along the principal components. 
It approximates the process by its projection onto the eigenspace corresponding to the largest $L$ eigenvectors of the covariance matrix. This decouples a Gaussian process into independent Gaussian components along the eigenvectors of the covariance matrix.

Several stochastic processes that arise commonly in mathematical finance, including Brownian motion, fractional Brownian motion and the Ornstein-Uhlenbeck process, have a 
KL expansion where the basis change is given by a Fourier transform. As observed in \cite{BDP23}, this opens up the possibility of obtaining a quantum speedup for generating analog encodings of these stochastic processes by utilizing the quantum Fourier transform circuit. In our work, we show that using the semi-digital encoding enables us to provide efficient algorithms for exponentiated Gaussian processes like the geometric Brownian motion. 

Using the semi-digital encodings we provide a quantum Monte Carlo algorithm for pricing discretely monitored Asian options over $T$ points with complexity $\widetilde{O}(1/\epsilon^{4})$, where the $\widetilde{O}$ notation suppresses factors poly-logarithmic in $T$. This is a speedup over classical Monte Carlo algorithms in the regime $T\gg 1/\epsilon^{2}$. The high complexity of the quantum Monte Carlo method is due to the use of two nested amplitude estimations, which arise due to the specific form of the payoff function for the discretely monitored Asian option. 

The semi-digital encoding approach does not use the quantum Fourier transform and therefore quantum-inspired sampling approaches can be derived from it. Inspired by our quantum techniques and in particular by the KL expansion of Gaussian processes, we provide two classical algorithms for pricing discretely monitored Asian options on GBM in Section~\ref{sec:classical}. Our first classical algorithm is a direct dequantization of the quantum algorithm with KL expansion and semi-digital encoding, in which we achieve a complexity of $\widetilde{O}(1/\epsilon^{6})$, that has a quadratic slowdown compared to the quantum algorithm for the two nested Monte Carlo estimators.

Then, we present a second classical algorithm using direct sub-sampling in the time domain. The algorithm relies on an analysis of the KL expansion that shows that if the eigenfunctions for the KL expansion are sufficiently regular, then one can in fact discretize the time domain into $\max(\frac{\sqrt{L_{X}(\epsilon)}} { \epsilon }, \frac{1}{\epsilon^{2}}) $ parts to obtain an $\epsilon$-approximate estimator for the Asian option payoff. Here, $L_{X}(\epsilon)$ is a suitably defined truncation index that depends on the stochastic process $X$, for the case of Brownian motion $L_{X}(\epsilon) = O(1/\epsilon^{2})$. Thus we are able to match the running time of the semi-digital encoding approach with a classical Monte Carlo algorithm for this case.

The quantized version of the time-domain sub-sampling-based classical algorithm uses amplitude estimation to further speed up over the classical algorithm. We provide such a quantum pricing algorithm in Section~\ref{sec:quantized-algorithms}. With a complexity of $\widetilde{O}(1/\epsilon^{3})$, this is the fastest quantum Monte Carlo integration method for Asian option pricing. 
All four quantum and classical algorithms for pricing discretely monitored Asian options are summarized in Table \ref{table1}. 

Although we present our specific algorithms in the context of pricing Asian options on a geometric Brownian motion, our analyses on the KL expansion and time-domain sampling techniques are more general and hence apply to other stochastic processes and payoffs. 
In particular, we show that under mild regularity conditions on the underlying stochastic process, the complexity of the time-domain sub-sampling algorithm and an algorithm based on KL expansion are related when pricing options with a wide range of continuous payoffs. However, the analysis relies on a trade-off between the eigenvalues in the KL expansion and the Lipschitz constants for the eigenfunctions, the overall complexity of the method will therefore depend on the stochastic process being used to model the underlying asset. 

Table~\ref{table1} may suggest that the semi-digital encoding quantum algorithm does not offer any advantages as it can be dequantized and the fastest quantum algorithm for Asian option pricing over a geometric Brownian motion is obtained by quantizing the time-domain sub-sampling-based classical Monte Carlo algorithm. However, we should note that the high complexity of the semi-digital encoding algorithm comes from the two nested Monte Carlo estimations required by the Asian option payoff, and our asymptotic analysis on the convergence of the nested Monte Carlo is quite general and hence can be potentially tightened. 
Furthermore, it may be possible to improve the KL expansion truncation error analysis for the semi-digital encoding algorithm by utilizing the specifics of the payoff function. Finally, the time-domain sub-sampling may also suffer from additional overheads for stochastic processes with non-smooth KL expansions. 

Therefore, the extent of the quantum speedup is problem specific and techniques like the semi-digital encoding may be helpful for other option pricing problems with different payoffs.  As an example, consider stochastic processes where the truncation $L_{X}(\epsilon)= O(1/\epsilon^{\alpha})$ for $\alpha \in (1,2]$ like the fractional Brownian motion with Hurst parameter $H>1/2$~\cite{BDP23}. The semi-digital encoding based quantum algorithm for Asian option pricing on these processes has complexity $\widetilde{O}(L_{X}(\epsilon)/\epsilon^{2})$, improving upon the classical time-domain sub-sampling algorithm with complexity $\widetilde{O}(1/\epsilon^{4})$. However, the quantized time-domain sub-sampling algorithm with complexity $\widetilde{O}(1/\epsilon^{3})$ remains the fastest algorithm for these cases as well.

\begin{table} \label{tab1} 
\begin{center} 
  \begin{tabular}{|c|c|c|} 
   \hline
  &&\\
   \textbf{Algorithm} & \textbf{Qubits used} & \textbf{Gate complexity}  \\
      &&\\
    \hline
      &&\\
      Semi-digital encoding & $O\en{\mathrm{log}(T)\epsilon^{-2}}$ & $O\en{\mathrm{polylog}(T)\epsilon^{-4}}$ \\
    &&\\
    \hline 
     &&\\
      Quantum-inspired sampling & NA &  $O\en{\epsilon^{-6}}$  \\
    &&\\
    \hline
     
     &&\\
      Time-domain sub-sampling  &  NA &  $O\en{\epsilon^{-4}}$ \\
    &&\\
    \hline
     &&\\
     Quantized sub-sampling   & $O\en{\epsilon^{-2}}$ & $O\en{\epsilon^{-3}}$  \\
      &&\\
    \hline
     &&\\
     Standard quantum MCI   & $O(T)$ & $O(T\epsilon^{-1})$  \\
      &&\\
    \hline
     &&\\
     Standard classical MCI   & NA & $O(T\epsilon^{-2})$  \\
      &&\\

    \hline  
   \end{tabular} 
   \caption {Qubit resource requirement and complexity for quantum and quantum-inspired Asian option pricing algorithms with $T$ discrete monitoring points and accuracy $\epsilon$. Factors poly-logarithmic in $1/\epsilon$ have been suppressed, but those that are poly-logarithmic in $T$ have been kept for clarity. The standard algorithms based on quantum and classical Monte Carlo integration (MCI) are also listed for comparison.
   } \label{table1} 
  \end{center} 
   \end{table}

Additionally, variance reduction techniques such as multi-level Monte Carlo (MLMC) may further reduce the complexity of the algorithms discussed herein. We discuss briefly in Section~\ref{sec:mlmc} that MLMC can be applied similarly to both the time-domain sub-sampling-based approach and the KL-expansion-based approach. 

This manuscript is organized as follows. Semi-digital encodings of stochastic processes and the algorithm for generating these encodings for a general class of processes including the Gaussian and exponentiated Gaussian processes are introduced in Section~\ref{sec2}. The quantum algorithm for pricing Asian options and its generalization using the semi-digital encodings is given in Section~\ref{sec3}. Classical algorithms inspired by the semi-digital encoding approach are discussed in Section~\ref{sec:classical}. The quantized version of the time-domain sub-sampling algorithm is given in Section~\ref{sec:quantized-algorithms}.

\section{Semi-digital encodings of stochastic processes}\label{sec2} 
Quantum encodings of stochastic processes in the literature have so far followed two approaches. 
The digital encoding stores the entire sample path of the stochastic process in a quantum register \cite{Chakrabarti_2021}. This approach allows for pricing of different types of path-dependent options, however it does not offer a quantum advantage over classical simulation of stochastic processes in the number of time steps $T$ as it emulates the classical algorithm for generating the encoding of the stochastic process.

Another approach to quantum encoding for stochastic processes using analog encodings where the stochastic process values are stored in the amplitudes of the quantum state was proposed recently~\cite{BDP23}. Analog encodings offer the possibility of quantum speedups using the quantum Fourier transform as several stochastic processes have compact expressions in the Fourier domain. There are two main difficulties with using analog encodings for option pricing. First, encoding the stochastic process values in the amplitude makes them difficult to compute with and restricts the type of options that can be efficiently priced using analog encodings.
In particular, any option with a payoff that requires arithmetic computation on the individual values along a sample path of the stochastic process other than a simple sum or average would be difficult to implement with the analog encoding.
Further, finding quantum algorithms generating analog encodings of exponentiated Gaussian processes like the geometric Brownian motion requires new techniques as the exponentiated processes do not have a compact expression as a Fourier series. 

It is therefore desirable to have a quantum encoding for stochastic processes that both stores the value of the stochastic process in a separate quantum register, making it accessible for computation, while also utilizing the compact representation of the stochastic processes in the Fourier domain. To this effect, 
we propose a notion of semi-digital encodings of stochastic processes that interpolates between the digital and analog encodings. 
\begin{definition} \label{d2} 
A semi-digital encoding for a single trajectory $ (v(1),v(2),\ldots v(T))$ of the stochastic process $S(T)$ is the quantum state, 
\al{  \ \ket{\psi_{v_1,v_2,\ldots v_T}}=\frac{1}{\sqrt{T}}\displaystyle\sum_{t=1}^T\ket{t}\ket{v(t)}. } 
\end{definition}
\noindent The semi-digital encoding allows for the exponentiation of the stochastic process as the value of the process is stored in an auxiliary register that can be exponentiated. 
In order to use semi-digital encodings with quantum Monte Carlo methods, a stronger notion of coherent semi-digital encodings for $S(T)$ is required.

\begin{definition}
The coherent semi-digital representation of the stochastic process $S(T)$ is a superposition over the semi-digital representations of the corresponding trajectories, 
\al{  \ket{S(T)}= \displaystyle\sum_{v_1,v_2,\ldots v_T} \sqrt{p_{v_1,v_2,\ldots v_T}} \ket{\psi_{v_1,v_2,\ldots v_T}} \ket{r_{v_1,v_2,\ldots,v_T}},  } 
where the second register $\ket{r}$ encodes the randomness used to generate the corresponding trajectory. 
\end{definition} 
\noindent Further, the notion of an $\epsilon$-approximate coherent semi-digital encoding is introduced in order to use the truncated Karhunen-Lo\`{e}ve expansion of the stochastic process. 
\begin{definition}\label{d3} 
An $\epsilon$-approximate coherent semi-digital encoding for $S(T)$ is a superposition over $\epsilon$ approximate trajectories, i.e.
\begin{align}\ket{S_{\epsilon}(T)}= \displaystyle\sum_{v_1,v_2,\ldots v_T} \sqrt{p_{v_1,v_2,\ldots v_T}} \ket{\psi'_{v_1,v_2,\ldots v_T}} \ket{r_{v_1,v_2,\ldots,v_T}}\end{align}
 such that \begin{align}\mathbb{E}_{p} [ \norm{ \ket{\psi'_{v_1,v_2,\ldots v_T}} - \ket{\psi_{v_1,v_2,\ldots v_T}}}_{2} ^{2} 
 ] \leq \epsilon^{2}.\end{align} 
\end{definition}

\noindent We will show that $\epsilon$-approximate (coherent) semi-digital for Gaussian processes can be generated with essentially the same resource requirements as required for generating the corresponding analog encodings using coherent qrithmetic operations. The semi-digital encoding provides the additional flexibility that can be used to exponentiate the corresponding processes and to price Asian options on them.

\subsection{Gaussian state preparation}
 We start by showing how to prepare the semi-digital encoding for the Brownian motion and then the geometric Brownian motion follows by exponentiation using coherent arithmetic operations.
 
 The semi-digital encoding of Brownian motion is prepared using the Fourier decomposition of the Brownian motion as a Wiener series. As each term in the Wiener series is an independent Gaussian random variable, the first step is a procedure for preparing quantum states representing Gaussian random variables. 

The Gaussian state on $n$ qubits represents the standard normal distribution on the interval $[-A, A]$ for some fixed constant $A$. The Gaussian state is defined as, 
\begin{align}
\label{eqn:gaussian_state}
    \ket{G} = \frac{1}{ (2\pi)^{1/4}\sqrt{N}}  \sum_{x=-N/2}^{N/2}e^{-(\frac{Ax}{N})^2}\ket{x} + \ket{\perp},
\end{align}
where $N = 2^n$ and $\ket{\perp}$ is an orthogonal state representing the deviation from the ideal Gaussian state due to truncation to a finite number of qubits. The constant $A$ is chosen to be large enough so that the norm of $\ket{\perp}$ is exponentially small and the finite truncation approximates the continuous Gaussian state. 

The Gaussian state preparation problem has attracted a great deal of attention in the recent literature as it is a fundamental problem on the loading of probability distributions into quantum states. 
A recent result, described in Theorem \ref{t1}, on Gaussian state preparation establishes the following bounds on the gate complexity of a quantum circuit needed 
to prepare a standard Gaussian state on $n$ qubits. 

\begin{theorem} \label{t1} 
\cite{BGM22} The Gaussian state with unit variance as defined in \eqref{eqn:gaussian_state} can be prepared to trace distance error $\epsilon$ with a quantum circuit consisting of $O(n( A+ \log(1/\epsilon)))$ gates.
\end{theorem}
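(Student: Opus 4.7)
The plan is to use the Grover-Rudolph style qubit-by-qubit construction, exploiting the fact that the Gaussian has an efficiently computable CDF. Write the target state (up to $\ket{\perp}$) as $\sum_{x=-N/2}^{N/2} \sqrt{p(x)}\ket{x}$ with $p(x) \propto e^{-2(Ax/N)^2}$, and decompose it as a product of $n$ controlled single-qubit rotations: the $k$-th rotation acts on the $k$-th qubit conditioned on the previous $k-1$ bits fixing some dyadic subinterval $I \subseteq [-A,A]$, and rotates to angle $\theta_I$ with $\cos^2\theta_I = \Pr[X\in I_L \mid X\in I]$, where $I_L$ is the left half of $I$. Each conditional probability is a ratio of two error function evaluations, so the whole construction reduces to implementing $n$ controlled $\mathrm{erf}$ evaluations followed by controlled rotations.

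Next I would analyze the cost of computing $\theta_I$ to precision $\delta$. The idea is to approximate $\mathrm{erf}$ on $[-A,A]$ by a polynomial whose degree and coefficient sizes determine the arithmetic circuit complexity. Because the derivatives of $e^{-t^2}$ decay rapidly but the integrand itself can be large on $[0,A]$, a Taylor or Chebyshev expansion on the relevant subinterval needs degree $O(A + \log(1/\delta))$ to hit precision $\delta$, and that polynomial can be evaluated (Horner's rule) in a constant number of multiplications per monomial using fixed-point arithmetic with $O(A+\log(1/\delta))$ bits. Dividing the two $\mathrm{erf}$ values and applying $\arccos$ (or equivalently using the formula for $\theta_I$ directly in a single polynomial-approximation package) preserves this bit complexity. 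Turning $\theta_I$ into a controlled rotation with precision $\delta$ costs an additional $O(\log(1/\delta))$ gates by standard synthesis.

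Finally I would propagate the errors. If each of the $n$ conditional rotations is implemented to trace distance $\delta$, a telescoping/triangle inequality argument gives total trace distance $O(n\delta)$; choose $\delta = \epsilon/n$. The gate count per layer is $O(A + \log(n/\epsilon))$, and summing over the $n$ qubits gives $O(n(A + \log(1/\epsilon)))$ once the $\log n$ is absorbed. It also remains to verify that the truncation tail $\ket{\perp}$ has norm $\le \epsilon$, which follows from the standard Gaussian tail bound $\int_{|t|>A} e^{-t^2} dt = O(e^{-A^2}/A)$ and justifies treating $A$ as a free parameter that can be taken as large as $O(\sqrt{\log(1/\epsilon)})$ if desired.

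The main obstacle is the second step: certifying that a polynomial of degree $O(A+\log(1/\delta))$ with $O(A+\log(1/\delta))$-bit coefficients uniformly approximates the Gaussian CDF on $[-A,A]$ to additive error $\delta$, and that the resulting fixed-point arithmetic circuit (including the division used in forming the conditional probability and the angle extraction) really does fit in $O(A+\log(1/\delta))$ gates without picking up a hidden polynomial factor. Everything else — the Grover-Rudolph recursion, the tail bound for $\ket{\perp}$, and the error accumulation across $n$ layers — is standard once the per-step arithmetic bound is in hand.
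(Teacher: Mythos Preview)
The paper does not prove this theorem: it is stated with a citation to \cite{BGM22} and no argument is given, so there is no in-paper proof to compare against.

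Moreover, the paper's own commentary right after the theorem tells you that your plan is not what \cite{BGM22} does. The authors write that ``the goal of \cite{BGM22} was to avoid coherent arithmetic'' and that the cited construction contains an amplitude-amplification step (which they drop here because the state is fed into QAE). Your Grover--Rudolph scheme is built entirely on coherent arithmetic --- evaluating $\mathrm{erf}$, dividing two CDF values, extracting an angle --- so whatever the cited proof is, it is a different mechanism.

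On the substance of your sketch, you correctly locate the obstacle but do not clear it, and with the tools you list it cannot be cleared. Horner evaluation of a degree-$d$ polynomial on $b$-bit fixed-point numbers costs $\Theta(d)$ multiplications and additions; each $b$-bit addition already costs $\Theta(b)$ gates and each multiplication $\Theta(b^{2})$ (or $\widetilde{\Theta}(b)$ with asymptotically fast arithmetic). With $d,b=\Theta(A+\log(1/\delta))$ the per-layer cost is at least $\Theta\big((A+\log(1/\delta))^{2}\big)$, and division plus $\arccos$ only add to this. Summing over $n$ layers gives $O\big(n(A+\log(1/\epsilon))^{2}\big)$, one full factor of $A+\log(1/\epsilon)$ worse than the stated bound. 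Matching the theorem requires a route that sidesteps per-qubit coherent evaluation of the CDF altogether --- which is exactly what the arithmetic-free, amplification-based construction of \cite{BGM22} is presumably supplying.
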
 
\noindent The goal of \cite{BGM22} was to avoid coherent arithmetic and as a result their Gaussian state preparation algorithm has a dependence on the variance as well. Here, we use coherent arithmetic for option pricing, it is therefore sufficient to generate Gaussian states with unit variance and then rescale using coherent multiplication if required for the stochastic process encoding. In addition, since we are performing integration via QAE, we do not need to perform the additional amplitude amplification step that occurs in \cite{BGM22}, as is indicated by the additional $N^{-1/2}$ factor in Equation~\eqref{eqn:gaussian_state}. This explains the reduced complexity compared to the original source.

\subsection{Karhunen-Lo\`{e}ve expansion for stochastic processes}
\label{sec:kl-expansion}
The relation between the quantum encodings of stochastic processes and Gaussian state preparation arises through the Karhunen-Lo\`{e}ve expansion of 
Gaussian processes. The Karhunen-Lo\`{e}ve expansion can be viewed as an expansion of the stochastic process along its principal components, the process is 
approximated by its projection onto the eigenspace corresponding to the 
largest $L$ eigenvectors of its covariance matrix. 
This decouples a Gaussian process into independent components along the basis of eigenvectors for the covariance matrix. 

Several Gaussian processes that arise commonly in mathematical finance including Brownian motion, fractional Brownian motion and the Ornstein-Uhlenbeck process have the eigenbasis for the covariance matrix by the Fourier transform. As observed in \cite{BDP23}, this opens up the possibility of obtaining a quantum speedup for generating encodings of these stochastic processes.

In order to state the results on the KL-expansion of the Brownian motion, we first define formally the KL-expansion of a stochastic process and then specialize to the case of Brownian motion. Without loss of generality, let the time interval of interest be $[0,1]$. Let $X_{t}$ be a centered, square-integrable stochastic process defined over some probability space $(\Omega, F, \mathbb{P})$ and indexed over $[0,1]$. Define the continuous covariance function to be $k_{X}(s,t) = \mathbb{E}[X_s X_t]$ for all $s,t \in [0,1]$. Then the existence of a KL decomposition is guaranteed by the Kosambi-Karhunen-Lo\`{e}ve theorem as stated below.
\begin{theorem}
\label{thm:kkl}
    (Kosambi-Karhunen-Lo\`{e}ve~\cite[Theorem~7.21]{shynk2012probability})
    Let $X_t$ be defined as above. Define the linear operator $\mathcal{K}$ on $\mathcal{L}^2([0,1])$ that maps $f \to \mathcal{K}f = \int_{0}^{1} k_{X}(s,\cdot) f(s) \,ds$ and let $\{e_k\}$ be the orthonormal basis on $\mathcal{L}^2([0,1])$ formed by the eigenfunctions of $\mathcal{K}$ with respective eigenvalues $\{\lambda_k\}$. Then $X_t$  admits a representation (called the KL-expansion) that is $t$-uniformly convergent in $\mathcal{L}^2(\mathbb{P})$ norm and given by
    \begin{align}
        X_t = \sum_{k=1}^\infty Z_k e_k(t) \text{ where } Z_k = \int_{0}^{1} X_t e_k(t) dt.
    \end{align}
    The random variables $Z_k$ have zero mean ($\mathbb{E}[Z_k] = 0, \forall k \in \mathbb{N}$) and are uncorrelated with variance $\lambda_k$ ($\mathbb{E}[Z_i Z_j] = \delta_{ij}\lambda_{j}, \forall i,j \in \mathbb{N}$).
\end{theorem}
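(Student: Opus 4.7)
The plan is to derive the KL-expansion from the spectral theory of the covariance operator $\mathcal{K}$. The first step is to verify that $\mathcal{K}$ is a compact, self-adjoint, positive operator on $\mathcal{L}^2([0,1])$. Compactness follows from continuity (hence square-integrability as a kernel on $[0,1]^2$) of $k_X$, which makes $\mathcal{K}$ Hilbert-Schmidt; self-adjointness follows from the symmetry $k_X(s,t) = k_X(t,s)$ inherited from $k_X(s,t)=\mathbb{E}[X_sX_t]$; positivity follows from the fact that $\langle f, \mathcal{K} f\rangle = \mathbb{E}\big[\big(\int_0^1 X_t f(t)\,dt\big)^2\big] \geq 0$. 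The spectral theorem for compact self-adjoint operators then yields an orthonormal basis $\{e_k\}$ of $\mathcal{L}^2([0,1])$ consisting of eigenfunctions with non-negative eigenvalues $\{\lambda_k\}$.

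The second step is to show that the random variables $Z_k = \int_0^1 X_t e_k(t)\,dt$ are well-defined (as mean-square integrals, using that $X_t$ is square-integrable and $e_k \in \mathcal{L}^2$) and satisfy the claimed moments. Centering is immediate by Fubini: $\mathbb{E}[Z_k] = \int_0^1 \mathbb{E}[X_t]e_k(t)\,dt = 0$. For the covariance, applying Fubini again and the eigenfunction property gives
\begin{align*}
\mathbb{E}[Z_i Z_j] = \int_0^1\!\!\int_0^1 k_X(s,t) e_i(s) e_j(t)\,ds\,dt = \int_0^1 (\mathcal{K}e_i)(t)\, e_j(t)\,dt = \lambda_i \langle e_i, e_j\rangle = \lambda_i \delta_{ij}.
\end{align*}

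The third and main step is to establish $t$-uniform $\mathcal{L}^2(\mathbb{P})$-convergence of the partial sums $S_N(t) = \sum_{k=1}^N Z_k e_k(t)$ to $X_t$. A direct computation, again by Fubini together with $\mathbb{E}[X_t Z_k] = \int_0^1 k_X(t,s) e_k(s)\,ds = \lambda_k e_k(t)$, reduces the residual variance to
\begin{align*}
\mathbb{E}\bigl[(X_t - S_N(t))^2\bigr] = k_X(t,t) - \sum_{k=1}^N \lambda_k\, e_k(t)^2.
\end{align*}
Convergence of the right-hand side to zero uniformly in $t$ is precisely the content of \emph{Mercer's theorem}, which asserts that the bilinear expansion $k_X(s,t)=\sum_{k=1}^\infty \lambda_k e_k(s)e_k(t)$ converges absolutely and uniformly on $[0,1]^2$ whenever $k_X$ is continuous, symmetric, and positive semi-definite.

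The hard part will be justifying the uniform convergence in the last step. Pointwise convergence from the spectral theorem is easy, but uniformity requires Mercer's theorem, whose proof uses Dini's theorem on the monotone partial sums of non-negative continuous functions together with the fact that $k_X$ is continuous on the compact set $[0,1]^2$. The other steps (Fubini, spectral decomposition, orthonormality manipulations) are mechanical once the operator-theoretic setup is in place; the Mercer ingredient is the one nontrivial analytic input and is where continuity of $k_X$ is essentially used.
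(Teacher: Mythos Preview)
Your proof is correct and follows the standard route to the Karhunen--Lo\`{e}ve theorem: spectral decomposition of the covariance operator, moment computations via Fubini, and Mercer's theorem for the uniform convergence of the residual $k_X(t,t)-\sum_{k\le N}\lambda_k e_k(t)^2$. Note, however, that the paper does not actually prove this statement; it is quoted as a classical result with a citation to \cite[Theorem~7.21]{shynk2012probability}, so there is no in-paper proof to compare against. Your argument is precisely the textbook one that the cited reference gives.
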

\noindent The KL-expansion allows us to define a truncation index 
$L_{X}(\epsilon)$ for the stochastic process. 

\begin{definition} \label{def:trunc-index}
Let $X_t$ be a stochastic process with KL expansion given by Theorem~\ref{thm:kkl}, the \textbf{truncation index at error $\epsilon$}, denoted by $L_X(\epsilon)$, is defined to be to be the smallest natural number such that
\begin{align}
    \mathbb{E}\left[\left(\sum_{k=0}^{L_X(\epsilon)} Z_k e_k(t) - X_t\right)^2\right] \le \epsilon^2, \quad \forall t \in [0,1].
\end{align}
\end{definition}
\noindent The $t$-uniform $\mathcal{L}^2(\mathbb{P})$ convergence, described above, of the KL decomposition of $X_t$ ensures that the truncation index $L_X(\epsilon)$ is finite for all $\epsilon > 0$. Specifically, the eigenvalues and eigenfunctions for the Brownian motion are given in Theorem~\ref{thm:specBM}.
\begin{theorem} \label{thm:specBM} 
For the Brownian motion $B(t)$, the Mercer kernel is given by $k_B(s, t)= \mathbb{E}[B(s)B(t)]= \min(s, t)$. The eigenvalues are given as $\lambda_{k} = \frac{1}{ (k-1/2)^{2} \pi^{2} } $ and the corresponding eigenfunctions are given as 
$e_{k}(t)= \sqrt{2} \sin( ((k-1)/2) \pi t)$.
\end{theorem}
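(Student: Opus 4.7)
The plan is to first verify the covariance kernel, then convert the KL integral eigenvalue equation into an ordinary differential equation with boundary conditions, solve the ODE, and finally normalize the eigenfunctions. For the kernel, I would use the standard fact that for $s \le t$ the Brownian increment $B(t)-B(s)$ is independent of $B(s)$ with mean zero, so $\E[B(s)B(t)] = \E[B(s)^2] + \E[B(s)]\,\E[B(t)-B(s)] = s$, and by symmetry the general expression is $\min(s,t)$.

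Next, I would write out the Fredholm integral equation $\int_{0}^{1} \min(s,t) e(s)\, ds = \lambda e(t)$ and split the kernel by the position of $t$:
\begin{align*}
\int_{0}^{t} s\, e(s)\, ds + t \int_{t}^{1} e(s)\, ds = \lambda\, e(t).
\end{align*}
Evaluating at $t=0$ immediately yields the boundary condition $e(0) = 0$. Differentiating both sides with respect to $t$ (using Leibniz's rule) gives $t\,e(t) + \int_{t}^{1} e(s)\, ds - t\, e(t) = \lambda\, e'(t)$, i.e.\ $\int_{t}^{1} e(s)\, ds = \lambda\, e'(t)$, from which setting $t=1$ yields the second boundary condition $e'(1) = 0$. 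Differentiating once more produces the ODE $-e(t) = \lambda\, e''(t)$.

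I would then solve $e'' + \tfrac{1}{\lambda} e = 0$ with $e(0)=0$ and $e'(1)=0$. Writing $\omega = 1/\sqrt{\lambda}$, the general solution is $e(t) = \alpha \sin(\omega t) + \beta \cos(\omega t)$; the condition $e(0)=0$ forces $\beta = 0$, and $e'(1)=0$ forces $\omega \cos(\omega) = 0$, so $\omega = (k - 1/2)\pi$ for $k \in \mathbb{N}$. This yields $\lambda_k = \frac{1}{(k-1/2)^2 \pi^2}$. Finally, imposing $\int_0^1 e_k(t)^2\, dt = 1$ fixes $\alpha = \sqrt{2}$, giving $e_k(t) = \sqrt{2}\sin((k-1/2)\pi t)$.

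None of these steps is genuinely hard: the kernel computation is immediate from the defining properties of Brownian motion, and the conversion of the Fredholm equation to an ODE is a standard trick. The only place that requires some care is correctly reading off both boundary conditions ($e(0)=0$ from the integral equation at $t=0$, and $e'(1)=0$ from the derivative at $t=1$), since mis-identifying them would shift the quantization and produce the wrong eigenvalues. Orthonormality of the $\{e_k\}$ in $\mathcal{L}^2([0,1])$ then follows automatically from the self-adjointness of the Hilbert--Schmidt operator $\mathcal{K}$ in Theorem~\ref{thm:kkl}, so no separate verification is required.
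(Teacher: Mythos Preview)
The paper does not actually provide a proof of this theorem; it is stated as a known result and used immediately afterward in the proof of \thmref{thm:LXBrownian}. Your argument is the standard textbook derivation (compute the kernel from independent increments, differentiate the Fredholm equation twice to get a Sturm--Liouville problem, read off the boundary conditions, solve, normalize) and is entirely correct. One small observation: your computation produces $e_k(t)=\sqrt{2}\sin\!\big((k-1/2)\pi t\big)$, which is the correct expression; the paper's statement has a typographical slip, writing $(k-1)/2$ in place of $k-1/2$ in the argument of the sine.
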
 
\noindent The Mercer kernel for the Brownian motion can be used to show that the truncation index $L_{X}(\epsilon)$ is 
$O(1/\epsilon^{2})$. 
\begin{theorem} \label{thm:LXBrownian}
The truncation index $L_{X}(\epsilon)= O(1/\epsilon^{2})$ if $X$ is a Brownian motion. 
\end{theorem}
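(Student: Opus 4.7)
The plan is to bound the tail of the KL expansion uniformly in $t$ using the eigenvalue decay from Theorem~\ref{thm:specBM}. By Definition~\ref{def:trunc-index}, I need to find the smallest $L$ such that for every $t \in [0,1]$,
\begin{align*}
\mathbb{E}\left[\left(\sum_{k=L+1}^{\infty} Z_k e_k(t)\right)^2\right] \le \epsilon^2.
\end{align*}
Since Theorem~\ref{thm:kkl} guarantees that the $Z_k$ are uncorrelated with $\mathbb{E}[Z_k^2] = \lambda_k$, this expectation collapses to $\sum_{k=L+1}^\infty \lambda_k e_k(t)^2$, so the entire question reduces to a deterministic tail estimate on the eigenvalue/eigenfunction series.

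First, I would use the explicit form $e_k(t) = \sqrt{2}\sin((k-1/2)\pi t)$ from Theorem~\ref{thm:specBM} to obtain the uniform bound $e_k(t)^2 \le 2$ for all $t \in [0,1]$. This decouples the $t$-dependence and reduces the problem to bounding $\sum_{k=L+1}^\infty \lambda_k$. Plugging in $\lambda_k = 1/((k-1/2)^2 \pi^2)$ yields
\begin{align*}
\sum_{k=L+1}^{\infty} \lambda_k e_k(t)^2 \;\le\; \frac{2}{\pi^2}\sum_{k=L+1}^{\infty}\frac{1}{(k-1/2)^2}.
\end{align*}

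Next, I would apply the standard integral comparison $\sum_{k=L+1}^\infty (k-1/2)^{-2} \le \int_{L}^{\infty} (x-1/2)^{-2}\,dx = 1/(L-1/2)$, so that the tail is at most $\tfrac{2}{\pi^2(L-1/2)}$. Setting this at most $\epsilon^2$ and solving for $L$ gives $L \ge \tfrac{2}{\pi^2 \epsilon^2} + \tfrac{1}{2} = O(1/\epsilon^2)$, which is the claimed bound on $L_X(\epsilon)$.

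There is no substantive obstacle here; every ingredient is handed to us by Theorem~\ref{thm:kkl} (orthogonality of the $Z_k$) and Theorem~\ref{thm:specBM} (closed-form eigenvalues and eigenfunctions). The only mild subtlety is ensuring that the bound is genuinely $t$-uniform, which is why I explicitly pull out the $\sup_t e_k(t)^2 \le 2$ estimate before summing; without that step one would get only a pointwise rather than uniform bound. With that observation in place, the estimate is just a telescoping/integral comparison of a $p$-series with $p=2$.
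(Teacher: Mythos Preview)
Your proposal is correct and follows essentially the same approach as the paper: both arguments compute the variance of the tail $\sum_{k>L} Z_k e_k(t)$ using orthogonality of the $Z_k$, bound $e_k(t)^2$ uniformly by $2$, and then control $\sum_{k>L}\lambda_k = O(1/L)$ via the explicit eigenvalues of Theorem~\ref{thm:specBM}. Your write-up is in fact slightly cleaner than the paper's, as you make the $t$-uniformity explicit before summing.
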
 
\begin{proof}
Let $X_{t}$ be a Brownian motion, using the spectrum of the Mercer kernel the $Z_{k}$ are evaluated to be 
$ \sqrt{2} \frac{ a_{k} \sin( (k-1)\pi t/2)} {(k-1/2)\pi}$ where $a_{k}$ are i.i.d. standard normal random variables. The sum $\sum_{k=0}^{L} Z_k e_k(t) - X_t = \sum_{k>L} \lambda_{k} a_{k}  \frac{\sqrt{2} \sin( ((k-1)/2) \pi t)}{ (k-1/2)\pi}$ is a mean $0$ Gaussian random variable with variance upper bounded by $\sum_{k>L} \frac{1}{(k-1/2)^{2}\pi}= O(1/L)$. For $L=O(1/\epsilon^{2})$, the expected squared norm is $O(\epsilon^{2})$ implying that $L_{X}(\epsilon)=O(1/\epsilon^{2})$. 
\end{proof}

The truncation index will be useful for the analysis of quantum inspired algorithms in section \ref{sec:classical}. In order to 
generate semi-digital encodings for the Brownian motion, an equivalent description of the KL expansion given by the Wiener series, as stated in Theorem~\ref{thm:wiener} is used.

\begin{theorem} \cite{W23} \label{thm:wiener} 
For independent random variables $a_{k} \sim \mathcal{N}(0,1)$, the following Fourier series represents Brownian motion on the interval $t \in [0, 1]$, 
\al{ 
\label{eq:wiener-bm}
B(t) = a_{0} t + \frac{\sqrt{2}} {\pi}  \sum_{k = 1}^{\infty} \frac{ a_{k} }{k} \sin(k \pi t).
} 
\end{theorem}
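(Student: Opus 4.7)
The plan is to prove this by showing that the right-hand side defines a centered Gaussian process on $[0,1]$ whose covariance is $\min(s,t)$, which uniquely characterizes Brownian motion. Write the candidate process as
\begin{align*}
\widetilde{B}(t) \;=\; a_{0}\,t \;+\; \frac{\sqrt{2}}{\pi}\sum_{k=1}^{\infty} \frac{a_{k}}{k}\sin(k\pi t).
\end{align*}

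First I would establish that the series converges. For each fixed $t$, the summands are independent mean-zero Gaussians with variances bounded by $1/k^{2}$, so the tail is summable and the series converges in $L^{2}(\Omega)$. To upgrade this to almost sure uniform convergence on $[0,1]$, I would invoke the It\^o--Nisio theorem applied to the partial sums (each of which is a continuous Gaussian random function), which gives a continuous centered Gaussian process $\widetilde{B}$. Next, using independence of the $a_{k}$ I would compute
\begin{align*}
\mathbb{E}\bigl[\widetilde{B}(s)\,\widetilde{B}(t)\bigr] \;=\; st \;+\; \frac{2}{\pi^{2}}\sum_{k=1}^{\infty}\frac{\sin(k\pi s)\sin(k\pi t)}{k^{2}}.
\end{align*}

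The main step is then to identify this expression with $\min(s,t)$ via the classical Fourier identity
\begin{align*}
\min(s,t)-st \;=\; \frac{2}{\pi^{2}}\sum_{k=1}^{\infty}\frac{\sin(k\pi s)\sin(k\pi t)}{k^{2}},
\end{align*}
which is the Mercer expansion of the Brownian-bridge kernel. I would justify it by observing that $G(s,t)=\min(s,t)-st$ is the Green's function of $-d^{2}/dt^{2}$ on $[0,1]$ with Dirichlet boundary conditions; the normalized eigenfunctions of this operator are $e_{k}(t)=\sqrt{2}\sin(k\pi t)$ with eigenvalues $k^{2}\pi^{2}$, so the spectral expansion $G(s,t)=\sum_{k}\lambda_{k}^{-1}e_{k}(s)e_{k}(t)$ yields exactly the claimed series. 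Alternatively, one can verify it by directly computing the sine Fourier coefficients $\int_{0}^{1}G(s,u)\sqrt{2}\sin(k\pi u)\,du = \sqrt{2}\sin(k\pi s)/(k\pi)^{2}$ through two integrations by parts and invoking Parseval.

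Combining the two displays gives $\mathbb{E}[\widetilde{B}(s)\widetilde{B}(t)]=\min(s,t)$, and since $\widetilde{B}$ is a continuous centered Gaussian process, it is equal in law to Brownian motion on $[0,1]$, proving the theorem. The hard part is really only the covariance-kernel identity; once that is in hand, the rest is bookkeeping, as the convergence is controlled by the $1/k^{2}$ summability and the Gaussian/independence structure trivializes the covariance calculation. A minor care point is that the drift term $a_{0}t$ is essential: it carries the variance of $B(1)$, since $\sum_{k\ge 1}\tfrac{2}{\pi^{2}k^{2}}\sin^{2}(k\pi\cdot 1)=0$ but $\min(1,1)=1$, which I would double-check to avoid an off-by-one mistake in the indexing.
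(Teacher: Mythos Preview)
Your proof is correct and follows the standard modern route: verify $L^{2}$ and then a.s.\ uniform convergence, compute the covariance termwise, and identify it with $\min(s,t)$ via the Mercer expansion of the Brownian-bridge kernel. The covariance identity and the role of the $a_{0}t$ term are handled correctly.

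There is nothing to compare against, however: the paper does not prove this theorem but simply cites it as a classical result due to Wiener (the \texttt{\textbackslash cite\{W23\}} in the statement). It is used as a black box to set up the truncated expansion in Equation~\eqref{eq:wiener-cheb}. So your write-up is not ``the same approach'' or ``a different approach''---it is a self-contained proof where the paper offers none. If anything, your argument is closer in spirit to the paper's Theorem~\ref{thm:specBM} (the KL eigenbasis of $\min(s,t)$), which uses the cosine half-frequencies $(k-1/2)\pi$ rather than the sine integer frequencies $k\pi$ you use here; the two expansions are equivalent representations of Brownian motion, with yours being the ``Brownian bridge plus linear drift'' decomposition.
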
 
The truncated Wiener series up to $L$ terms is equivalent to  a degree $L$ polynomial in $\cos(\pi t)$ with coefficients known functions of the $a_{k}$. This follows from the definition of the Chebychev polynomials of the second kind $U_{n}(\cos \theta) \sin(\theta) = \sin( (n+1) \theta )$.
\all{ 
B_{L}(t) =  p_{L}(\bm{a}, \cos(t)) = a_{0} t + \frac{\sqrt{2}}{ \pi} \sin(\pi t) \sum_{k=0}^{L-1} \frac{a_{k}}{k} U_{k} (\cos(\pi t)).
} {eq:wiener-cheb} 
where $a_{k}$ are independent standard $\mathcal{N}(0,1)$
random variables. The exponential of equation \eqref{eq:wiener-cheb} corresponds to geometric Brownian motion with parameters $\mu=0$ and $\sigma=1$. The GBM with parameters $\mu$ and 
$\sigma$ is defined as $\exp(\sigma B_{L}(t) + (\mu-\sigma^{2}/2)t)$, it is also a degree $L$ polynomial in $\cos(\pi t)$ and is obtained by multiplying the coefficients of the polynomial in equation \eqref{eq:wiener-cheb} by $\sigma$ and adding the linear drift term $(\mu-\sigma^{2}/2)t)$.

\subsection{Mapped Karhunen-Lo\`{e}ve expansion for stochastic processes}
\label{sec:mapped-kl-expansion}

Theorem~\ref{thm:kkl} suggests that any zero-mean square-integrable stochastic process can be represented using a KL series. 
Despite the general applicability of the theorem, finding the basis functions for the KL expansion of a general stochastic process is often difficult in practice.
For example, the geometric Brownian motion, one of the most widely used stochastic processes in quantitative finance, does not have a closed-form KL expansion.
In particular, the GBM has non-linear dependence on its past values, making it difficult to find eigenfunctions and eigenvalues for its covariance function.
On the other hand, the GBM may be generated by exponentiating a Brownian motion, which, as shown in Section~\ref{sec:kl-expansion}, has a closed-form KL expansion.
This suggests us to represent a more general stochastic process by mapping a simpler stochastic process which has a closed-form KL expansion.

In this section, we assess the feasibility of approximating a stochastic process by applying a function to the truncated KL expansion of another, and analyze the error due to the KL-expansion truncation.
Specifically, we assume that the process $X_t$ that has a closed-form KL expansion is a sub-Gaussian process -- a natural generalization of the Gaussian processes considered in Section~\ref{sec:kl-expansion}.
We also assume that the target process $S_t$ is linked to $X_t$ by a function $g$ satisfying Assumption~\ref{asm:sub-exp-lipschitz-func}.

\begin{assumption}
\label{asm:sub-exp-lipschitz-func}
There exist constants $K_g$ and $c_g$ such that $g: \Omega \to \mathbb{R}$ satisfies
\al{
g(x) - g(y) \le \max \en { \abs{\exp \en{c_g x} - \exp \en{c_g y}}, K_g \abs{x - y}} \quad \forall x, y \in \Omega.
}
\end{assumption}
To bound the error of approximating $S_t = g(X_t)$ using the truncated KL series of $X_t$, we first prove the following lemma.
\begin{lemma}
\label{lem:mapped-proc-bound}
Suppose $X \in \Omega_X$ and $Z \in \Omega_Z$ are two independent sub-Gaussian random variables, of which $X$ has finite mean $\mu$ and variance $\sigma^2$, and $Z$ has mean zero and variance $\epsilon^2$.
Define two additional dependent random variables $S_X = g(X)$ and $S_Y = g(Y)$, where $Y = X + Z \in \Omega_Y := \enp{x+z : x \in \Omega_X, z \in \Omega_Z}$ and $g: \Omega_X \cup \Omega_Y \to \mathbb{R}$ is a function satisfying Assumption~\ref{asm:sub-exp-lipschitz-func}.
Then we can bound the $\mathcal{L}_2$ distance between $S_X$ and $S_Y$ by
\al{
\mathbb{E} \enc{\en{S_X - S_Y}^2} \le C_1 \epsilon^2 + O(\epsilon^4),
}
where
\al{
C_1 = c_g^2 \exp \en{c_g^2 \sigma^2 + 2 c_g \mu} + K_g.
}
\end{lemma}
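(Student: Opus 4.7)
The plan is to first apply Assumption~\ref{asm:sub-exp-lipschitz-func} pointwise to decouple the difference $S_X - S_Y = g(X) - g(Y)$ into an ``exponential'' contribution and a ``linear'' contribution. Using $\max(a,b)^2 \leq a^2 + b^2$, the assumption yields
\als{
(g(X) - g(Y))^2 \leq \bigl(\exp(c_g X) - \exp(c_g Y)\bigr)^2 + K_g^2 (X - Y)^2,
}
and after taking expectations the problem reduces to bounding the two terms on the right separately.

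The linear term is immediate: $X - Y = -Z$, and since $Z$ has mean zero and variance $\epsilon^2$, its contribution is exactly $K_g^2 \epsilon^2$, which accounts for the $K_g$ term in $C_1$ (up to the constant). For the exponential term, I would factor $\exp(c_g Y) - \exp(c_g X) = \exp(c_g X)(\exp(c_g Z) - 1)$ and use the independence of $X$ and $Z$ to split the expectation as a product,
\als{
\mathbb{E}\bigl[(\exp(c_g X) - \exp(c_g Y))^2\bigr] = \mathbb{E}[\exp(2 c_g X)] \cdot \mathbb{E}\bigl[(\exp(c_g Z) - 1)^2\bigr].
}
The first factor is controlled by the sub-Gaussian MGF bound for $X$, which gives $\mathbb{E}[\exp(2 c_g X)] \leq \exp(2 c_g \mu + c_g^2 \sigma^2)$ under the sub-Gaussian proxy-variance convention used in the paper, matching the exponential factor appearing in $C_1$.

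The remaining factor $\mathbb{E}[(\exp(c_g Z) - 1)^2]$ is where the $\epsilon^2$ scaling is born. I would expand $\exp(c_g Z) - 1 = c_g Z + \tfrac{c_g^2 Z^2}{2} + \cdots$ and square to obtain $c_g^2 Z^2 + \text{higher-order terms}$. Taking expectation and using $\mathbb{E}[Z^2] = \epsilon^2$ gives the leading-order term $c_g^2 \epsilon^2$, which combined with the MGF bound above produces exactly the $c_g^2 \exp(c_g^2 \sigma^2 + 2 c_g \mu)$ summand in $C_1$. A slicker route is to write the factor as $M_Z(2 c_g) - 2 M_Z(c_g) + 1$ and Taylor-expand both MGFs in $c_g$; the constant and $O(c_g)$ terms cancel, leaving $c_g^2 \epsilon^2$ as the leading contribution plus higher-order moment terms.

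The main obstacle will be controlling the remainder as $O(\epsilon^4)$ rather than $O(\epsilon^3)$. A naive Taylor expansion leaves a $c_g^3 \mathbb{E}[Z^3]$ term which is generally only $O(\epsilon^3)$; one needs either symmetry of $Z$ (which holds for the Gaussian tails arising in the KL truncation application) or a careful use of the sub-Gaussian moment bounds $\mathbb{E}[Z^{2k}] \leq (2k)! (\epsilon^2/2)^k / k!$ together with the observation that any surviving odd-moment contribution can be absorbed into the higher-order even terms by Cauchy--Schwarz. Once these technicalities are resolved, combining the exponential and linear bounds yields $\mathbb{E}[(S_X - S_Y)^2] \leq C_1 \epsilon^2 + O(\epsilon^4)$ with the claimed $C_1$.
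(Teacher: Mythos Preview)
Your overall architecture matches the paper's proof: apply Assumption~\ref{asm:sub-exp-lipschitz-func}, bound the squared max by the sum, factor the exponential difference as $\exp(c_g X)(\exp(c_g Z)-1)$, split by independence, and control $\mathbb{E}[\exp(2c_g X)]$ via the sub-Gaussian MGF. (Your $K_g^2$ is in fact what one gets from squaring $K_g|x-y|$; the paper's stated $K_g$ appears to be a typo, so your version of that term is the correct one.)

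The genuine gap is in your handling of $\mathbb{E}[(\exp(c_g Z)-1)^2]$. Your proposed Cauchy--Schwarz fix does \emph{not} upgrade the odd-moment term to $O(\epsilon^4)$: for sub-Gaussian $Z$ with variance $\epsilon^2$ one has $\mathbb{E}[Z^4]=O(\epsilon^4)$, so $|\mathbb{E}[Z^3]|\le \sqrt{\mathbb{E}[Z^2]\,\mathbb{E}[Z^4]}=O(\epsilon^3)$, which is exactly the order you were trying to avoid. Invoking symmetry of $Z$ would work but is an extra hypothesis not present in the lemma. The paper sidesteps the issue entirely with a one-line inequality rather than a Taylor expansion: using $e^x \ge 1+x$ pointwise gives
\[
(1-e^{c_g Z})^2 \;=\; 1 + e^{2c_g Z} - 2e^{c_g Z} \;\le\; e^{2c_g Z} - 1 - 2c_g Z,
\]
so that after taking expectations and using $\mathbb{E}[Z]=0$ only the single MGF $M_Z(2c_g)$ survives. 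The sub-Gaussian bound then gives $M_Z(2c_g)-1 \le \exp(O(\epsilon^2))-1 = O(\epsilon^2)+O(\epsilon^4)$, with no third-moment term ever appearing. This is the missing idea in your proposal; once you insert it, the rest of your argument goes through and yields the claimed $C_1\epsilon^2 + O(\epsilon^4)$.
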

\begin{proof}
Using Assumption~\ref{asm:sub-exp-lipschitz-func} on $g$, we have
\al{
\mathbb{E} \enc{\en{S_X - S_Y}^2}
&\le \mathbb{E} \enc{\max \enp{ \en{\exp (c_g X) - \exp (c_g Y)}^2, K_g \en{X - Y}^2}} \\
&\le \mathbb{E} \enc{\en{\exp (c_g X) - \exp (c_g Y)}^2} + K_g \mathbb{E} \enc{ \en{X - Y}^2} \\
&= \mathbb{E} \enc{\en{\exp (c_g X) - \exp (c_g Y)}^2} + K_g \epsilon^2.
}
Since $X$ and $Z$ are independent we have
\al{
\mathbb{E} \enc{\en{\exp (c_g X) - \exp (c_g Y)}^2}
&= \mathbb{E} \enc{\exp (2 c_g X)} \mathbb{E} \enc{ \en{1 - \exp (c_g Z)}^2}.
}
Since $X$ and $Z$ have sub-Gaussian distributions, we have 
\al{
\mathbb{E} \enc{\exp (2 c_g X)} \le \exp \en{c_g^2 \sigma^2 + 2 c_g \mu},
}
and 
\al{
\mathbb{E} \enc{ \en{1 - \exp (c_g Z)}^2}
&= \mathbb{E} \enc{ 1 + \exp (2 c_g Z) - 2 \exp (c_g Z)} \\
& \leq \mathbb{E} \enc{ 1 + \exp (2 c_g Z) - 2(1 + c_g Z)}\\
& = -1 + \mathbb{E} \enc{\exp (2 c_g Z)} - 2 c_g \mathbb{E} \enc{Z}\\
&\leq -1 + \exp \en{c_g^2 \epsilon^2} \\
&= c_g^2 \epsilon^2 + O(\epsilon^4).
}
Therefore,
\al{
\mathbb{E} \enc{\en{S_X - S_Y}^2} 
&\le \exp \en{c_g^2 \sigma^2 + 2 c_g \mu} \en{\exp \en{c_g^2 \epsilon^2} - 1} + K_g \epsilon^2 \\
&= \enc{c_g^2 \exp \en{c_g^2 \sigma^2 + 2 c_g \mu} + K_g} \epsilon^2 + O(\epsilon^4) \\
&= C_1 \epsilon^2 + O(\epsilon^4),
}
which completes the proof.
\end{proof}
Armed with the above lemma, we state below our theorem along with the assumption on the target stochastic process $S_t$ that we are trying to approximate.
To make our analysis more general, we allow the function $g$ to be time-dependent and denote the function at each time $t$ by $g_t$.
\begin{assumption}
\label{asm:mapped-kl-process}
The stochastic process $S_t$ is generated by $S_t = g_t(X_t)$, where $g_t$ is a function satisfying Assumption~\ref{asm:sub-exp-lipschitz-func} at each $t$, and $X_t$ is a stochastic process that has a KL expansion with truncation index $L_X(\epsilon)$ and the random variables $Z_k$ in the KL expansion having sub-Gaussian distributions.
\end{assumption}

\begin{theorem}
\label{thm:mapped-kl-truncation-error}
Suppose $S_t = g_t(X_t)$ is a stochastic process with $S_t$, $X_t$ and $g_t$ satisfying Assumption~\ref{asm:mapped-kl-process} on $t \in [0, 1]$, then $S'_t = g_t(X'_t)$ is an $\epsilon$-approximation of $S_t$ satisfying
\al{
\mathbb{E} \enc{\en{S_t - S'_t}^2} \le C_2 \epsilon^2 \quad \forall t \in [0, 1],
}
where $X'_t$ is the stochastic process constructed by truncating the KL expansion for $X_t$ to $L_X(\epsilon)$ terms and $C_2$ is a constant independent of $t$ and $\epsilon$ for all $\epsilon \to 0$.
\end{theorem}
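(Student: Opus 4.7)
The plan is to apply Lemma~\ref{lem:mapped-proc-bound} pointwise in $t$, taking the random variables in the lemma to be $X := X'_t$ and $Z := X_t - X'_t$, so that $Y := X + Z = X_t$ and the function $g$ in the lemma is $g_t$. The theorem's conclusion then falls out of the lemma once each of its hypotheses is verified and the resulting constants are shown to be $t$-independent.

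First I would verify the hypotheses on $Z = X_t - X'_t$. By Definition~\ref{def:trunc-index} of the truncation index, $\mathbb{E}[(X_t - X'_t)^2] \le \epsilon^2$ uniformly in $t \in [0,1]$, which is exactly the $\epsilon^2$-variance requirement for $Z$ in the lemma. Since $X_t$ is centered and each KL coefficient satisfies $\mathbb{E}[Z_k] = 0$ by Theorem~\ref{thm:kkl}, both $X'_t$ and $X_t - X'_t$ are mean-zero, so in particular $Z$ has mean zero as required and the parameter $\mu$ in the lemma can be taken to be $0$.

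Next I would verify independence and sub-Gaussianity of $X = X'_t$ and $Z = X_t - X'_t$. Sub-Gaussianity of each is inherited from Assumption~\ref{asm:mapped-kl-process}: the KL coefficients $Z_k$ are sub-Gaussian, and closure of the sub-Gaussian class under (independent) sums then gives that both partial sums $X'_t = \sum_{k \le L_X(\epsilon)} Z_k e_k(t)$ and $X_t - X'_t = \sum_{k > L_X(\epsilon)} Z_k e_k(t)$ are sub-Gaussian. Independence of the two sums follows from the fact that they involve disjoint families of $Z_k$; this uses that the $Z_k$ are mutually independent, which I would read into Assumption~\ref{asm:mapped-kl-process} (the Kosambi--Karhunen--Lo\`eve theorem by itself only supplies uncorrelatedness, so this strengthening is essential). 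For the variance parameter I take $\sigma^2 \le k_X(t,t) \le \sigma_*^2$, where $\sigma_*^2 := \sup_{t \in [0,1]} k_X(t,t)$ is finite because $X_t$ is square-integrable with continuous covariance on the compact interval $[0,1]$.

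With the hypotheses verified, Lemma~\ref{lem:mapped-proc-bound} gives $\mathbb{E}[(S_t - S'_t)^2] \le C_1(t)\, \epsilon^2 + O(\epsilon^4)$ with $C_1(t) = c_{g_t}^2 \exp(c_{g_t}^2 \sigma_t^2) + K_{g_t}$. Using uniform bounds $c_* \ge \sup_t c_{g_t}$ and $K_* \ge \sup_t K_{g_t}$ for the family $\{g_t\}_{t\in[0,1]}$ gives $C_1(t) \le c_*^2 \exp(c_*^2 \sigma_*^2) + K_* =: C_1^*$. Taking $C_2$ slightly larger than $C_1^*$ absorbs the $O(\epsilon^4)$ remainder for all sufficiently small $\epsilon$, producing the desired bound uniformly in $t$. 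The main obstacle is exactly the two uniformity points flagged above: (i) promoting the uncorrelatedness of the KL coefficients to full independence within the sub-Gaussian setting so that Lemma~\ref{lem:mapped-proc-bound} applies, and (ii) obtaining uniform-in-$t$ bounds for $c_{g_t}$, $K_{g_t}$, and $\mathrm{Var}(X_t)$. These either need to be explicitly incorporated into Assumption~\ref{asm:mapped-kl-process} or verified directly in each application (e.g.\ Brownian motion and geometric Brownian motion, where both hold trivially).
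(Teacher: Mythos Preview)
Your proposal is correct and follows essentially the same approach as the paper: decompose $X_t = X'_t + X''_t$, observe that both pieces are mean-zero and sub-Gaussian, and apply Lemma~\ref{lem:mapped-proc-bound} pointwise in $t$, then argue uniformity over the compact interval $[0,1]$. If anything, you are more careful than the paper's own proof, which is quite terse: you explicitly verify each hypothesis of the lemma and correctly flag two genuine subtleties---(i) that the KKL theorem only yields uncorrelated $Z_k$, so independence must be read into Assumption~\ref{asm:mapped-kl-process} (or obtained automatically in the Gaussian case), and (ii) that the constants $c_{g_t}$, $K_{g_t}$, and $\mathrm{Var}(X_t)$ must be uniformly bounded in $t$ for a single $C_2$ to exist---points the paper's proof glosses over.
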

\begin{proof}
By construction, $X_t$ may be written as $X_t = X'_t + X''_t$.
According to Assumption~\ref{asm:mapped-kl-process}, all random variables in the KL expansion for $X_t$ are sub-Gaussian with mean zero. 
Therefore, $X'_t$ and $X''_t$ are both sub-Gaussian with mean zero.
Applying Lemma~\ref{lem:mapped-proc-bound} to $S_t$ and $S'_t$ for each $t$ we immediately have
\al{
\mathbb{E} \enc{\en{S_t - S'_t}^2} = O(\epsilon^2) \quad \forall t \in [0, 1].
}
Since $t$ is bounded to $[0, 1]$, for $\epsilon \to 0$, there exists a uniform constant $C_2$ such that
\al{
\mathbb{E} \enc{\en{S_t - S'_t}^2} \le C_2 \epsilon^2 \quad \forall t \in [0, 1],
}
which completes the proof. 
\end{proof}

In particular, Theorem~\ref{thm:mapped-kl-truncation-error} applies to the case where a GBM is approximated by exponentiating the truncated Wiener series plus a linear shift $\mu t$.
In this case, $X_t$ is the Brownian motion and $g_t(x) = \exp(x + \mu t)$.
Since the truncation index $L_X(\epsilon) = O(1/\epsilon^2)$ for the Wiener series (Thoerem~\ref{thm:LXBrownian}), the GBM may be approximated up to a mean squared error of $\epsilon^2$ using $L = O(1/\epsilon^2)$ terms in the truncated Wiener series.

\subsection{Semi-digital encoding for geometric Brownian motion}
\label{sec:state-prep-gbm}

We primarily consider option pricing where the underlying asset is modeled by a geometric Brownian motion (GBM), since the GBM is one of the most commonly used stochastic processes for modeling financial asset prices as it corresponds to the solutions of the Black Scholes equation.

Geometric Brownian motion is the Doleans exponential of Brownian motion with drift. In other words, it is the unique solution to the Stochastic Differential Equation (SDE) 
\al{ 
dS_t=\sigma S_tdB_t+\mu S_t dt.
} 
The SDE has two independent parameters $\mu$ and $\sigma$ corresponding to the drift and volatility respectively. The solution to this SDE is given in closed form as, 
\al{ 
G_{\mu, \sigma}(t) =S_0 e^{\sigma B_t+ (\mu-\frac{1}{2}\sigma^{2})t}.
} 
The solution to the SDE is always positive and can thus be used to model asset prices. 

The semi-digital encoding for the GBM is generated by first generating a semi-digital encoding for the Brownian motion and then using coherent arithmetic for exponentiation. The semi-digital encoding for the Brownian motion is in turn prepared by decomposing the stochastic process using KL expansion, and coherently evaluating the eigenfunctions in the KL expansion up to the $L$-th largest eigenvalue as a polynomial using Equation~\eqref{eq:wiener-cheb}. The quantum algorithm for preparing the semi-digital encoding of the GBM is presented as Algorithm~\ref{alg:sdGBM}. 

\begin{theorem} 
Algorithm~\ref{alg:sdGBM} generates an $\epsilon$-approximate semi-digital encoding for geometric Brownian motion with parameters $(\mu, \sigma)$ with cost $\widetilde{O}(L)$ where $L=O(1/\epsilon^2)$ is the number of terms retained in the Wiener series and $\widetilde{O}$ suppresses factors logarithmic in $T, L$
and $1/\epsilon$. 
\end{theorem}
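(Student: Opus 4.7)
The plan is to decouple the analysis into two parts: a correctness argument showing that the output state is a valid $\epsilon$-approximate coherent semi-digital encoding in the sense of Definition~\ref{d3}, and a resource count showing gate complexity $\widetilde O(L)$. Based on the construction described in the section, Algorithm~\ref{alg:sdGBM} should consist of: (i) preparing $L$ independent Gaussian states $\ket{G_0}\otimes\ket{G_1}\otimes\cdots\otimes\ket{G_{L-1}}$ holding the coefficients $a_k$; (ii) creating a uniform superposition $\tfrac{1}{\sqrt{T}}\sum_{t=1}^T\ket{t}$ on a time register; (iii) coherently evaluating the degree-$L$ polynomial $p_L(\bm a,\cos(\pi t/T))$ from Equation~\eqref{eq:wiener-cheb} (plus the drift term $\mu t$) into a value register; and (iv) applying coherent exponentiation and scalar multiplication to produce $S_0\exp(\sigma B_L(t)+(\mu-\tfrac12\sigma^2)t)$. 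I would first fix the Algorithm~\ref{alg:sdGBM} pseudocode in these terms and then analyze it.

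For correctness, I would bound the total error by a triangle inequality over three sources. First, the KL/Wiener truncation error: choosing $L=O(1/\epsilon^2)$ and invoking Theorem~\ref{thm:LXBrownian} gives an $\epsilon$-approximation of Brownian motion in $\mathcal L^2(\mathbb P)$ uniformly in $t$; applying Theorem~\ref{thm:mapped-kl-truncation-error} to the map $g_t(x)=S_0\exp(\sigma x+(\mu-\tfrac12\sigma^2)t)$ (which clearly satisfies Assumption~\ref{asm:sub-exp-lipschitz-func} with $c_g=\sigma$ and some Lipschitz constant on any bounded domain) promotes this to an $O(\epsilon)$ approximation for the GBM. Second, the Gaussian state preparation error: by Theorem~\ref{t1}, each $\ket{G_k}$ can be prepared to trace distance $\epsilon/L$, so the joint state is within trace distance $\epsilon$ of the ideal product. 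Third, the arithmetic discretization error from evaluating the Chebyshev polynomial and the exponential at finite precision, which can be made $O(\epsilon)$ by using $O(\log(L/\epsilon))$ bits of precision. Adding up all three contributions in the $\mathcal L^2$ sense and rescaling $\epsilon$ by a constant yields the bound in Definition~\ref{d3}.

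For the gate count, step (i) costs $L\cdot O(n(A+\log(L/\epsilon)))=\widetilde O(L)$ by Theorem~\ref{t1}; step (ii) costs $O(\log T)$; step (iv) uses standard coherent arithmetic for $\exp$ and multiplication and contributes only $\mathrm{polylog}(L,1/\epsilon)$. The bottleneck is step (iii): evaluating the degree-$L$ polynomial in $\cos(\pi t/T)$ via Horner's rule uses $L$ coherent multiply-adds, each of cost $\mathrm{polylog}(1/\epsilon)$ at the chosen precision, for a total of $\widetilde O(L)$. Summing over steps gives $\widetilde O(L)$ with the $\widetilde O$ suppressing factors logarithmic in $T$, $L$, and $1/\epsilon$, matching the claim.

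The main obstacle, I expect, is the careful justification of step (iii) and its interaction with the error analysis. One must (a) confirm that the Chebyshev identity in Equation~\eqref{eq:wiener-cheb} indeed lets the truncated Wiener series be written as a single polynomial that can be evaluated by Horner's rule in $O(L)$ multiply-adds rather than $L$ independent trigonometric evaluations, and (b) propagate the fixed-precision arithmetic error through the polynomial evaluation, the exponential, and finally the mapping of Lemma~\ref{lem:mapped-proc-bound}, without inflating the $\epsilon$ dependence beyond the claimed $L=O(1/\epsilon^2)$. The remaining steps (uniform superposition, Gaussian loading via Theorem~\ref{t1}, coherent exponentiation) are essentially off-the-shelf, so the real work is a precision-budget bookkeeping argument combining Theorems~\ref{t1}, \ref{thm:LXBrownian}, and \ref{thm:mapped-kl-truncation-error}.
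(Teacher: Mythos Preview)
Your proposal is correct and follows essentially the same approach as the paper: correctness via Theorem~\ref{thm:mapped-kl-truncation-error} (with $L=O(1/\epsilon^2)$ from Theorem~\ref{thm:LXBrownian}), Gaussian state cost via Theorem~\ref{t1}, and the $\widetilde O(L)$ bottleneck from evaluating the degree-$L$ polynomial in Equation~\eqref{eq:wiener-cheb} by Horner's rule. If anything, your plan is more explicit than the paper's own proof, which does not separately budget the Gaussian-loading trace-distance error or the finite-precision arithmetic error and simply absorbs them into the $\widetilde O$.
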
 
\begin{proof} 
The correctness of the algorithm, that it generates an $\epsilon$ approximate coherent semi-digital encoding for the geometric Brownian motion (Definition \ref{d3}) as retaining $L=O(1/\epsilon^2)$ terms in the Wiener series and exponentiating approximates GBM to $\mathcal{L}_{2}$ error $\epsilon$ by Theorem 
\ref{thm:mapped-kl-truncation-error}. 

We analyze the arithmetic complexity for each step of Algorithm~\ref{alg:sdGBM}. The first step generates $O(L)$ Gaussian states, the gate complexity for this step is $\widetilde{O}(L)$ where factors logarithmic in $L$ and $1/\epsilon$ are suppressed by Theorem \ref{t1}. 
The coefficients of the degree $L$ polynomial in $\cos(\pi t)$ specified in Equation~\eqref{eq:wiener-cheb} can be pre-computed for step 3 of the algorithm. The arithmetic complexity of evaluating a degree $L$ polynomial is $O(L)$ as exactly $L$ additions and $L$ multiplications are needed for polynomial evaluation using Horner's method~\cite{horner1819xxi}.
The values $\sin(\pi t)$ and $\cos(\pi t)$ can be computed coherently using the time register as the input, and then the semi-digital encoding of the Wiener series is obtained by coherently evaluating the degree $L$ polynomials of $\cos(\pi t)$ 
and $\sin(\pi t)$. The additional overhead for coherent arithmetic is $O(p)$ for addition and $O(p^{2})$ for multiplication if $p$ bits of precision are used.

The arithmetic complexity for the computation of $G_{L, \mu, \sigma}(\bm{a}, t)$ is independent of the values of $\mu$ and $\sigma$ as changing these values rescales the coefficients of the polynomial in Equation~\eqref{eq:wiener-cheb}. 
It follows that the total cost for generating a semi-digital encoding of the geometric Brownian motion using Algorithm~\ref{alg:sdGBM} is $\widetilde{O}(L)$. 

\end{proof} 

\noindent For option pricing, the $\mu$ and $\sigma$ are taken to be some fixed known values and the abbreviated notation $G_{L}(\bm{a}, t)$ is used instead of $G_{L, \mu, \sigma}(\bm{a}, t)$
in the following sections for encodings of the geometric Brownian motion. 

\begin{algorithm} [H]
\begin{algorithmic}[1]

\REQUIRE Parameters $(S_0, \mu, \sigma)$ for geometric Brownian motion $S_t$ on $t \in [0, 1]$. Accuracy $\epsilon$ for semi-digital encoding and truncation level $L=O(1/\epsilon)$ for the Brownian motion Wiener series. 
\ENSURE  A $\epsilon$-approximate semi-digital encoding for geometric Brownian motion $\ket{G_{L}(\mu, \sigma)}$.  

\STATE Prepare $L+1$ copies of the unit variance Gaussian state $|G\rangle$ (Theorem \eqref{t1}) to get the quantum state
\al{ 
\ket{G}^{\otimes (L+1)} = \bigotimes_{k=0}^{L} \left( \sum_{a_k=-N/2}^{N/2} \sqrt{ p_k(a_k) }\ket{a_k} \right) 
=  \sum_{\bm{a}} \sqrt{ p(\bm{a}) } \ket{\bm{a}}.
}
\STATE Append a time register to the quantum state containing equal superposition of all $T$ time steps in $t \in [0, 1]$ for the discrete monitoring
\al{ 
\sum_{\bm{a}} \sqrt{ p(\bm{a}) } \ket{\bm{a}} \otimes \frac{1}{\sqrt{T} } \sum_{t=1}^{T} \ket{t}.
}
\STATE Evaluate the polynomial $p_{L}(\bm{a'}, \cos(t))$ in Equation~\eqref{eq:wiener-cheb} coherently for $a_k' = 2A a_{k}/N$ using $O(L)$ coherent arithmetic operations to obtain the semi-digital encoding for Brownian motion, 
\al{ 
\sum_{\bm{a}} \sqrt{ p(\bm{a}) } \ket{\bm{a}} \otimes \frac{1}{\sqrt{T} } \sum_{t} \ket{t} \ket{B_{L}(\bm{a}, t)},
} 
where $B_{L}(\bm{a}, t)$ is the smoothed Brownian motion path obtained by truncating the Wiener series to $L$ terms given by coefficients $\bm{a} = (a_0, a_1, \dots, a_L)$. 

\STATE Coherently exponentiate the $B_{L}(\bm{a}, t)$ register, which gives us the following state
\all{ 
\sum_{\bm{a}} \sqrt{ p(\bm{a}) } \ket{\bm{a}} \otimes \frac{1}{\sqrt{T} } \sum_{t} \ket{t} \ket{G_{L, \mu, \sigma}(\bm{a}, t)},
} {eleven} 
where $G_{L, \mu, \sigma}(\bm{a}, t) = \exp(\sigma B_{L}(\bm{a}, t) + (\mu- \sigma^{2}/2) t)$.

\end{algorithmic}
\caption{Quantum algorithm for generating semi-digital encoding of geometric Brownian motion.}  \label{alg:sdGBM}

\end{algorithm}

\section{Quantum algorithm for Asian option pricing} \label{sec3}

In this section, we provide a quantum algorithm for pricing discretely monitored Asian options using the semi-digital encoding for the geometric Brownian motion. We begin with a discussion on Asian options on GBM and known methods to price them in Section~\ref{sec:asianintro}. This truncation analysis for pricing path dependent options using the KL expansion is carried out in Section~\ref{sec:trunk}, and the quantum Asian option pricing algorithm is presented and analyzed in Section~\ref{sec:Qasianoption}. 

\subsection{Asian options} \label{sec:asianintro}
Asian options are path dependent options whose payoff depends on, for example, the arithmetic or geometric average of the underlying asset over the period of the contract. In what follows, we will provide a quantum algorithm for pricing discretely monitored Asian call options in which the average taken is an arithmetic one, and the underlying follows a geometric Brownian motion. Discretely monitored Asian options are traded more frequently than continuously monitored Asian options due to their cost-effectiveness in trading and risk management.  
Asian options are less volatile than their vanilla counterparts by virtue of the fact that the payoff is based on an average of the underlying asset values over the time horizon of the contract, rather than simply the terminal asset value. As a result, Asian options are typically less expensive than their vanilla counterparts.

For continuously monitored Asian options with an arithmetic average, the expression for the price at time $0$ with strike $K$ is given as, 
\al{
\mathbb{E}\enc{\en{\frac{1}{T}\int_{0}^{T}S_u du-K}^{+}}
} 
a discounting factor $e^{-rT}$ may be added where $r$ is the riskless rate of interest. Through out this paper, we assume $r=0$ for simplicity while also noting that all methods discussed herein generalize straightforwardly to the case where a non-trivial discounting factor is required.

Methods for pricing continuously monitored Asian options with arithmetic averages include numerical estimation using probability density functions, Laplace transforms, approximation of the lognormal distribution, path integral approaches, and numerical methods for partial differential equations~\cite{geman1993,linetsky2004}. Note that there is an analytical solution for the price of a continuously monitored Asian option with a geometric average of the underlying price. 

The prices of discretely monitored Asian options on arithmetic averages do not have a closed form solution. Methods for pricing such options include Monte Carlo methods, numerical methods for partial differential equations, binomial lattice methods, analytic approximation to the lognormal distribution and approximation via Taylor Series expansion.

The price for an Asian option with strike $K$ on a discretely monitored stochastic process $S(t)$ is given by the expression, 
\al{
\mathbb{E} \left[ \left( \frac{1}{T} \sum_{t} S(t) - K\right)^+\right].
}
The expectation is over the different trajectories of the stochastic process. 
As mentioned in the previous section, we will consider the case where the underlying stochastic process is a GBM.

The standard classical approach for pricing such an Asian option using Monte Carlo methods has a complexity that scales as $O(T/\epsilon^2)$. Here, we propose quantum and quantum-inspired algorithms for pricing such options with only polylogarithmic dependence on $T$ at the expense of a worse dependence on $\epsilon$.

\subsection{Pricing path-dependent options with mapped Karhunen-Lo\`{e}ve expansion}
\label{sec:trunk}

In this section, we attempt to use exponentiated Wiener series to approximate the geometric Brownian motion and apply it to the pricing of Asian options.
We utilize results from Section~\ref{sec:mapped-kl-expansion} to analyze the error in the estimation of the option price due to the truncation in the Wiener series.

We first perform the analysis in a more general setting and then apply it to the specific case of pricing Asian options on an underlying modeled by a GBM.
The type of options we consider are path-dependent options that generalizes the Asian option. 
Specifically, we assume that the payoff function $f$ depends on the value of the underlying stochastic process $S_t$ at times $\{t_1,\dots,t_T\}$ that are discrete points in $[0,1]$. We consider continuous payoff functions, for which we make the following natural assumption, as stated in Assumption~\ref{asm:lipschitz-payoff}. 
\begin{assumption}
    \label{asm:lipschitz-payoff}
    For any two realizations $(x_{t_i})_{i=1}^{T}, (y_{t_i})_{i=1}^{T}$ of the underlying process $(S_{t_i})_{i=1}^{T}$, the payoff $f$ satisfies $\lvert f(x_{t_1}, \dots, x_{t_T}) - f(y_{t_1}, \dots, y_{t_T})\rvert \leq \sum_{i=1}^{T} w_{t_i} \lvert x_{t_i} - y_{t_i}\rvert$ for $\sum_i w_{t_i} =1$ with $w_{t_i} \geq 0$ almost surely. 
\end{assumption}
We state in Theorem~\ref{thm:asian-kl-error} our result on the pricing of such path-dependent options with the underlying modeled by a stochastic process that generalizes the GBM.
However, before that, we state a slightly more general lemma for the convenience of a later usage.
\begin{lemma}
\label{lem:asian-kl-error}
Let $S_t$ and $S'_t$ be two stochastic processes on $t \in [0, 1]$ with an $\mathcal{L}^2(\mathbb{P})$ distance $\mathbb{E} \enc{{\en{S_{t_i} - S'_{t_i}}}^2} \le \epsilon^2$.
Let $f$ be a payoff function satisfying Assumption~\ref{asm:lipschitz-payoff}.
Then the mean squared error between the payoffs of $\{S'_{t_i}\}_{i=1}^{T}$ and  $\{S_{t_i}\}_{i=1}^{T}$ is also bounded by $\epsilon^2$. 
In other words,
\al{
\mathbb{E} \enc{\en{f\en{S_{t_1}, \dots, S_{t_T}} - f\en{S'_{t_1}, \dots, S'_{t_T}}}^2} \le \epsilon^2.
}
\end{lemma}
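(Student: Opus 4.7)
The plan is to chain together three elementary steps: the payoff Lipschitz bound (Assumption~\ref{asm:lipschitz-payoff}), Jensen's inequality applied to the convex function $x \mapsto x^2$, and the hypothesized pointwise $\mathcal{L}^2(\mathbb{P})$ control $\mathbb{E}[(S_{t_i}-S'_{t_i})^2] \le \epsilon^2$. The upshot is that the mixing constant in the final bound is exactly $1$ because $\{w_{t_i}\}$ is a probability vector, which is precisely why the conclusion inherits the same $\epsilon^2$ from the hypothesis rather than picking up a factor of $T$.

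Concretely, I would first square both sides of the Lipschitz inequality in Assumption~\ref{asm:lipschitz-payoff} to obtain, pathwise,
\al{
\en{f(S_{t_1},\dots,S_{t_T}) - f(S'_{t_1},\dots,S'_{t_T})}^2 \le \en{\sum_{i=1}^{T} w_{t_i}\, \abs{S_{t_i}-S'_{t_i}}}^2.
}
Since the weights $w_{t_i}$ are nonnegative and sum to $1$ almost surely, they form a probability distribution on $\{1,\dots,T\}$, so Jensen's inequality (applied to $x \mapsto x^2$) yields
\al{
\en{\sum_{i=1}^{T} w_{t_i}\, \abs{S_{t_i}-S'_{t_i}}}^2 \le \sum_{i=1}^{T} w_{t_i}\, \en{S_{t_i}-S'_{t_i}}^2,
}
also almost surely.

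I would then take expectations over the joint law of $(S, S')$ and the (possibly random) weights, push the expectation inside the finite sum by linearity, and invoke the hypothesis $\mathbb{E}[(S_{t_i}-S'_{t_i})^2] \le \epsilon^2$ together with $\sum_i w_{t_i} = 1$ to conclude
\al{
\mathbb{E} \enc{\en{f(S_{t_1},\dots,S_{t_T}) - f(S'_{t_1},\dots,S'_{t_T})}^2} \le \sum_{i=1}^{T} w_{t_i}\, \epsilon^2 = \epsilon^2.
}

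The only subtlety, and the only place where care is needed, is the interchange in the last expectation when the weights $w_{t_i}$ are genuinely random and correlated with $S-S'$: one needs either $w_{t_i}$ deterministic (as in the equally weighted Asian payoff $w_{t_i} = 1/T$ used later), or independence of $w$ from the process increments, or the uniform conditional bound $\mathbb{E}[(S_{t_i}-S'_{t_i})^2 \mid w] \le \epsilon^2$. In the target application weights are deterministic, so this is not a real obstacle, and I would simply remark that the argument goes through verbatim in that case; otherwise one replaces expectation by conditional expectation given $w$ and then takes an outer expectation using $\sum_i w_{t_i} = 1$ a.s.
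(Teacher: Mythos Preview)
Your proof is correct and follows essentially the same route as the paper: apply the Lipschitz bound from Assumption~\ref{asm:lipschitz-payoff}, use convexity of $x \mapsto x^2$ (Jensen with the probability weights $w_{t_i}$) to pull the square inside the sum, then take expectations and invoke the pointwise $\epsilon^2$ bound. The paper finishes with $\max_i \mathbb{E}[(S_{t_i}-S'_{t_i})^2] \le \epsilon^2$ rather than explicitly summing the weights, but this is the same step; your added remark about possibly random weights is a valid caveat the paper simply omits.
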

\begin{proof}
From Assumption~\ref{asm:lipschitz-payoff} we have
\al{
\mathbb{E} \enc{\en{f\en{S_{t_1}, \dots, S_{t_T}} - f\en{S'_{t_1}, \dots, S'_{t_T}}}^2} 
\le \mathbb{E} \enc{\en{\sum_{i=1}^T w_i \abs{S_{t_i} - S'_{t_i}}}^2}.
}
Using our assumptions on the $w_i$ and the convexity of the square function, we have
\al{
\mathbb{E} \enc{\en{\sum_{i=1}^T w_i \abs{S_{t_i} - S'_{t_i}}}^2}
\le \mathbb{E} \enc{\sum_{i=1}^T w_i {\en{S_{t_i} - S'_{t_i}}}^2}
\le \max_{i} \mathbb{E} \enc{{\en{S_{t_i} - S'_{t_i}}}^2} \le \epsilon^2,
}
completing the proof.
\end{proof}

\begin{theorem}
\label{thm:asian-kl-error}
Let $S_t$ be a stochastic process on $t \in [0, 1]$ satisfying Assumption~\ref{asm:mapped-kl-process}, and $S'_t$ be the corresponding stochastic process with the KL series of $X_t$ truncated to $L_X(\epsilon)$ terms, where $X_t$ is the stochastic process used to construct $S_t$.
Let $f$ be a payoff function satisfying Assumption~\ref{asm:lipschitz-payoff}.
Then the mean squared error in approximated path-wise payoff $f$ using $S'_t$ compared to its true value using $S_t$ is $O(\epsilon^2)$. In other words,
\al{
\mathbb{E} \enc{\en{f\en{S_{t_1}, \dots, S_{t_T}} - f\en{S'_{t_1}, \dots, S'_{t_T}}}^2} = O(\epsilon^2).
}
\end{theorem}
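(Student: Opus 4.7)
The plan is to chain together the two results immediately preceding this theorem: Theorem~\ref{thm:mapped-kl-truncation-error} controls the pointwise $\mathcal{L}^2(\mathbb{P})$ distance between $S_t$ and $S'_t$ at each monitoring time, and Lemma~\ref{lem:asian-kl-error} then lifts this pointwise control to an error bound on any payoff satisfying Assumption~\ref{asm:lipschitz-payoff}. Since both ingredients are already in hand, the proof is essentially a composition of the two.

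First I would observe that the hypotheses of Theorem~\ref{thm:mapped-kl-truncation-error} are met by assumption: $S_t = g_t(X_t)$ with $g_t$ satisfying Assumption~\ref{asm:sub-exp-lipschitz-func}, the random variables $Z_k$ in the KL expansion of $X_t$ are sub-Gaussian with mean zero, and $S'_t = g_t(X'_t)$ where $X'_t$ is the $L_X(\epsilon)$-term truncation. Applying the theorem gives a uniform constant $C_2$, independent of $t$ and $\epsilon$, such that
\begin{equation}
\mathbb{E}\!\left[(S_{t_i} - S'_{t_i})^2\right] \le C_2 \epsilon^2 \quad \text{for every } t_i \in [0,1].
\end{equation}

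Next I would invoke Lemma~\ref{lem:asian-kl-error} with the two processes being $S_t$ and $S'_t$, and with error level $\sqrt{C_2}\,\epsilon$ in place of $\epsilon$. The lemma's hypothesis that $f$ satisfies Assumption~\ref{asm:lipschitz-payoff} is given, and its $\mathcal{L}^2(\mathbb{P})$ input hypothesis is exactly what we just established. The conclusion of the lemma then yields
\begin{equation}
\mathbb{E}\!\left[\left(f(S_{t_1},\ldots,S_{t_T}) - f(S'_{t_1},\ldots,S'_{t_T})\right)^2\right] \le C_2 \epsilon^2 = O(\epsilon^2),
\end{equation}
which is the claimed bound.

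There is no real obstacle here, since the heavy lifting was done in Lemma~\ref{lem:mapped-proc-bound} (the sub-Gaussian moment-generating-function estimate used in Theorem~\ref{thm:mapped-kl-truncation-error}) and in the convexity/weighted-average step in Lemma~\ref{lem:asian-kl-error}. The only thing to be careful about is that the constant $C_2$ from Theorem~\ref{thm:mapped-kl-truncation-error} must be absorbed into the final $O(\epsilon^2)$ — this is why the theorem states an $O(\epsilon^2)$ bound rather than an explicit constant — and that the bound is uniform in $t$, which is precisely what allows the $\max_i$ step in Lemma~\ref{lem:asian-kl-error} to stay at $O(\epsilon^2)$ independent of the number $T$ of monitoring points. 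This uniformity in $T$ is the crucial feature that makes the algorithm's complexity only poly-logarithmic in $T$.
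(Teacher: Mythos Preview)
Your proposal is correct and follows essentially the same approach as the paper: apply Theorem~\ref{thm:mapped-kl-truncation-error} to get the uniform-in-$t$ pointwise bound $\mathbb{E}[(S_{t_i}-S'_{t_i})^2]=O(\epsilon^2)$, then invoke Lemma~\ref{lem:asian-kl-error} to transfer this to the payoff. Your write-up is in fact slightly more explicit than the paper's about tracking the constant $C_2$ and emphasizing the $T$-uniformity, but the argument is the same.
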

\begin{proof}
Based on the assumptions on $S_t$ and $S'_t$, the  result from Theorem~\ref{thm:mapped-kl-truncation-error} applies, i.e.,
\al{
\mathbb{E} \enc{{\en{S_{t_i} - S'_{t_i}}}^2} = O(\epsilon^2).
}
We can then apply Lemma~\ref{lem:asian-kl-error}, from which the desired result is obtained immediately.
\end{proof}

For the case of Asian option pricing, the weight vector $(w_{t_i})_{i=1}^{T}$ is the uniform vector $w_{t_i}=\frac{1}{T}\, \forall i \in [T]$, which satisfies Assumption~\ref{asm:lipschitz-payoff}.
Furthermore, Theorem~\ref{thm:asian-kl-error} shows that when the underlying process is a GBM, retaining $O(1/\epsilon^{2})$ terms in the Wiener series suffices to obtain an $\epsilon$-approximation for the Asian option price. 

\subsection{Quantum algorithm using semi-digital encoding}\label{sec:Qasianoption}

The quantum algorithm for pricing an Asian option is specified as Algorithm~\ref{alg:AsianOption}. It uses a semi-digital encoding for the geometric Brownian motion 
with truncation parameter $L$ determined by the analysis in Theorem~\ref{thm:asian-kl-error} and then uses two nested standard amplitude estimation to estimate the 
Asian option payoff to additive error $\epsilon$. 

\begin{algorithm} [H]
\label{alg:quantum-kl-asian}
\begin{algorithmic}[1]
\REQUIRE Parameters $(S_0, \mu, \sigma)$ for geometric Brownian motion $S_t$ on $t \in [0, 1]$. Strike price $K$ for the Asian option and number of time steps $T$ for discrete monitoring. Accuracy $\epsilon$. 
\ENSURE  An additive error $\epsilon$ estimate of the Asian option price.  
\STATE Use the procedure detailed in step 1 of Algorithm~\ref{alg:sdGBM} prepare $L= O(1/\epsilon^{2})$ independent copies of Gaussian states, 
       \al{ 
        \sum_{\bm{a}} \sqrt{ p(\bm{a}) }  \ket{\bm{a}}. 
        } 
\STATE Let $U_{1}$ be the unitary that first applies steps 2-4 of Algorithm~\ref{alg:sdGBM} to create a semidigital encoding of geometric Brownian motion with 
drift and variance parameters $(\mu, \sigma)$.  It then applies a controlled rotation on an auxiliary qubit to get the geometric Brownian motion values 
in the amplitudes to prepare state, 
 \al{ 
        \label{eq:semi-digital-with-amp}
        \sum_{\bm{a}} \sqrt{ p(\bm{a}) } \ket{\bm{a}} \otimes \frac{1}{\sqrt{T} } \sum_{t} \ket{t} \ket{G_{L}(\bm{a}, t)} \en{  \sqrt{\frac{G_{L}(\bm{a}, t)}{ G_{\max} }} \ket{0} + \alpha(\bm{a}, t) \ket{1} },
        } 
 where $G_{\max} \geq \max_{\bm{a},t} G_{L}(\bm{a}, t)$ is a chosen normalization constant to ensure amplitudes on the last qubit have norms no greater than $1$, and $\alpha(\bm{a}, t) = \sqrt{1 - {G_{L}(\bm{a}, t)} / { G_{\max} }}$.
 
 \STATE Perform coherent amplitude estimation for the unitary $U_{1}$ to estimate to additive error $\epsilon$ the squared amplitude for the auxiliary qubit being $\ket{0}$ to obtain the quantum state, 
 \al{ 
 \label{eq:asian-inner-ae}
  \sum_{\bm{a}} \sqrt{ p(\bm{a}) } \ket{\bm{a}} \ket{\frac{\sum_{t} G_{L}(\bm{a}, t)}{T G_{\max}}}
 }    
 The extra registers are uncomputed by applying the inverse unitary $U_{1}^{-1}$. 

\STATE Let $U_{2}$ be the unitary that computes the Asian option payoff function in the auxiliary register and then applies a controlled rotation on an auxiliary qubit to get the 
Asian option payoff function in the amplitudes to prepare the state, 
\all{ 
  \sum_{\bm{a}} \sqrt{ p(\bm{a}) } \ket{\bm{a}} \ket{ \en{ \frac{\sum_{t} G_{L}(\bm{a}, t)}{T G_{\max}} - \frac{K}{G_{\max}} }^{+} } \en{  \sqrt{ \en{ \frac{\sum_{t} G_{L}(\bm{a}, t)}{T G_{\max}} - \frac{K}{G_{\max}} }^{+}} \ket{0} + \beta \ket{1} } 
 }  {eq:asian}  
 
 \STATE Perform amplitude estimation for the nested unitary $U= U_{1} U_{2}$ to get an additive error $G_{\max}^{-1}\epsilon$ estimate of the probability of measuring $\ket{0}$ in \eqref{eq:asian} which, when scaled by $G_{\max}$ is equal to the payoff for the Asian option to additive error $\epsilon$.

\end{algorithmic}
\caption{Quantum algorithm for Asian option pricing using semi-digital encodings.}  \label{alg:AsianOption}

\end{algorithm}

The error for the algorithm is determined by the truncation analysis in Theorem~\ref{thm:asian-kl-error}. The quantity being estimated in step 3 of the algorithm is the exponentiated average $\frac{1}{T} \sum_{t} f(B_{t})$ 
for the exponential function and Theorem~\ref{thm:asian-kl-error} shows that to get an $\epsilon$ additive error approximation for the average it suffices to retain $O(L^{2})$ terms in the Wiener series. 

When we use amplitude estimation to get to the state in \eqref{eq:asian-inner-ae}, the success probability for the procedure is $\gamma^2 > .81$. The success probability 
can be boosted to be arbitrarily close to one by taking a logarithmic number of independent copies and taking the median estimate, which ensures we can suppress the failure probability with a logarithmic overhead. Also if the auxiliary registers are not uncomputed in \eqref{eq:asian-inner-ae}, we can apply amplitude estimation again to estimate the combined amplitude of the $\ket{0}\ket{0}\ket{0}$ state for all $\bm{a}$.

The end-to-end algorithm for pricing the Asian option involves possible sources of error from the Wiener series expansion and two nested amplitude estimations, and the latter of which also introduce a failure probability, which can be boosted as indicated above. We have the following result on the overall complexity of the quantum Asian option pricing algorithm, 
\begin{theorem} 
Algorithm~\ref{alg:AsianOption} estimates the payoff for a discretely monitored Asian option up to additive error $\epsilon<1/T$ using $O(\epsilon^{-2})$ qubits and with overall complexity $O(\epsilon^{-4}\log^2(\delta^{-1})\polylog(T))$, where $\epsilon$ is the desired relative error and $\delta$ is the failure probability of the algorithm.
\end{theorem}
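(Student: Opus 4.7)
The plan is to decompose the total error into three independent sources — the truncation of the Wiener series inside the semi-digital encoding, the inner amplitude estimation that loads the time-averaged GBM value into a computational-basis register, and the outer amplitude estimation that extracts the expected payoff — and to show that each contributes at most $O(\epsilon)$ to the final estimate while keeping the gate count at $\widetilde{O}(\epsilon^{-4})$. The truncation budget is the easiest piece: the discretely monitored Asian payoff $f(S_{t_1},\dots,S_{t_T}) = (\tfrac{1}{T}\sum_i S_{t_i} - K)^+$ satisfies Assumption~\ref{asm:lipschitz-payoff} with uniform weights $w_{t_i} = 1/T$, so Theorem~\ref{thm:asian-kl-error} applied with $X_t$ the Brownian motion and $g_t(x) = \exp(\sigma x + (\mu - \sigma^2/2)t)$ guarantees that retaining $L = O(\epsilon^{-2})$ Wiener terms keeps the pathwise squared payoff error below $O(\epsilon^2)$, and hence the bias on the expectation below $O(\epsilon)$.

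For the qubit count, each of the $L+1 = O(\epsilon^{-2})$ Gaussian registers uses $O(\log(A/\epsilon))$ qubits by Theorem~\ref{t1}, and the time, polynomial-evaluation, Brownian-value, GBM-value, and amplitude-loading registers are all of size $\polylog(T/\epsilon)$. Summing gives $O(\epsilon^{-2})$ qubits up to the $\log(T)$ factor tracked in Table~\ref{table1}. For the cost of one application of $U_1$, the theorem preceding Algorithm~\ref{alg:sdGBM} shows that preparing the semi-digital encoding is $\widetilde{O}(L) = \widetilde{O}(\epsilon^{-2})$ gates, and the controlled rotation loading $\sqrt{G_L(\bm a,t)/G_{\max}}$ adds only a $\polylog(T/\epsilon)$ overhead.

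The inner amplitude estimation in step~3 targets a number in $[0,1]$, so by the standard AE guarantee it needs $O(\epsilon^{-1})$ invocations of $U_1$ to achieve additive error $\epsilon$, and a median-of-$O(\log \delta^{-1})$ boosting reduces its failure probability below $\delta/2$; one application of $U_2$ therefore costs $\widetilde{O}(\epsilon^{-3}\log \delta^{-1}\polylog T)$ gates, and the $U_1^{-1}$ uncomputation step cleanly disentangles the auxiliary registers so the state in \eqref{eq:asian-inner-ae} is a valid input to the outer loop. The outer amplitude estimation on $U = U_1 U_2$ then reads off the payoff by estimating the amplitude of $\ket{0}$ in \eqref{eq:asian} to additive error $\epsilon/G_{\max}$; since $G_{\max}$ is a problem-independent constant under the standard GBM parametrization, this is again $O(\epsilon^{-1})$ queries with a second median boost of size $O(\log\delta^{-1})$. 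Multiplying out gives the target $\widetilde{O}(\epsilon^{-4}\log^2\delta^{-1}\polylog T)$, and a union bound over the two boosting steps yields overall failure probability $\leq \delta$.

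The main obstacle will be tracking the error of the \emph{outer} AE when the unitary it is given is not an ideal $U_2^{\mathrm{ideal}}$ but rather a surrogate whose input register only approximates $\sum_t G_L(\bm a,t)/(T G_{\max})$. I would handle this by showing that the map $x \mapsto \sqrt{(x - K/G_{\max})^+}$, implicit in the controlled rotation of step~4, is Lipschitz on the range of values attained with non-negligible probability — the hypothesis $\epsilon < 1/T$ rules out the argument being pushed below the inner AE resolution and guarantees the square root is well conditioned — so that a trace-distance perturbation of $O(\epsilon)$ at the input of $U_2$ propagates to a trace-distance perturbation of $O(\epsilon)$ in \eqref{eq:asian} and hence to $O(\epsilon)$ in the estimated amplitude. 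Combined with Chernoff bounds for the median-of-means boosting and the truncation bias from Theorem~\ref{thm:asian-kl-error}, this closes the error accounting and proves the stated complexity.
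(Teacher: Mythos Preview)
Your proposal is correct and follows essentially the same approach as the paper: decompose the error into the Wiener-series truncation (handled by Theorem~\ref{thm:asian-kl-error} with $L=O(\epsilon^{-2})$) plus two nested amplitude estimations each costing a factor $O(\epsilon^{-1}\log\delta^{-1})$ on top of the $\widetilde{O}(\epsilon^{-2})$ cost of a single $U_1$ call, and count qubits from the $L+1$ Gaussian registers. Your treatment is in fact more explicit than the paper's surrounding discussion --- in particular the Lipschitz error-propagation argument for the outer rotation and the union bound over the two boosting steps --- though your justification that $\epsilon<1/T$ ``guarantees the square root is well conditioned'' is not quite the right reading of that hypothesis (the paper itself leaves this point informal).
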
 

The high complexity of the semi-digital encoding based approach for Asian option pricing is due to the use of two nested amplitude estimation procedures and this arises from the specific form of the Asian option payoff function. 
One way of obtaining asymptotic improvements for the algorithm is to improve the KL expansion truncation error analysis in Theorem~\ref{thm:asian-kl-error} (or more precisely Lemma~\ref{lem:asian-kl-error}) by better utilizing properties of the payoff function.
However, we obtain a quantum Asian option pricing algorithm with complexity $O(1/\epsilon^3)$ by an alternative quantum-inspired method in section \ref{sec:quantized-algorithms}.

\section{Classical algorithms inspired by the semi-digital encoding approach}\label{sec:classical}

In this section, we consider classical algorithms inspired by the quantum semi-digital encoding approach presented in Section~\ref{sec3} for the option pricing problems considered in this paper.  
For the Asian option pricing problem, we provide two different algorithms whose running time has a poly-logarithmic dependence on $T$. 
The baseline classical algorithm to compare against has running time $O(T\sigma/\epsilon^{2})$, where $\sigma$ is the standard deviation of the option payoff.
The baseline algorithm generates the stochastic process trajectory in time $O(T)$ and then uses the classical Monte Carlo method for estimating the option price up to a root mean squared error (RMSE) of $\epsilon$, or equivalently, an additive error of $\epsilon$ with high probability. 
\subsection{Quantum-inspired sampling}

Algorithm~\ref{alg:quantum-kl-asian} suggests us that in order to approximately compute the pathwise mean of the underlying asset price modeled by a GBM $G(t)$ across all monitored time points $t \in [0, 1]$ for the Asian option payoff, it suffices to prepare classical sample query access for $G_L(\bm{a}, t)$ as we did in Equation~\eqref{eq:semi-digital-with-amp} for the quantum algorithm, and draw random samples from it to estimate the mean.
Specifically, we wish to sample $G_L(\bm{a}, t)$ at different $t$ with probabilities proportional to $G_L(\bm{a}, t)$.

This is easily accomplished by rejection sampling as follows: 
\begin{enumerate}
    \item Sample $t$ uniformly from $[0, 1]$.
    \item Sample $z$ uniformly from $[0, 1]$.
    \item Accept the sample $t$ iff $z \le G_{L}(\bm{a},t)/G_{\max}$.
\end{enumerate}
It is easy to see that the distribution over $t \in [0, 1]$ conditioned on the acceptance criteria is the correct one, and the expected number of trials required to generate one successful sample $t$ across all $t \in [0, 1]$ is given by $\frac{ T G_{\max} }{\sum_t G_{L}(\bm{a}, t)}$. 
By standard reduction, we can use $\frac{ T G_{\max} }{\sum_t G_{L}(\bm{a}, t)} \log(1/\eta)$ samples to obtain a sample from a distribution that has total variation distance at most $\eta$ from the correct distribution. 
These samples can then be used as in \cite[Algorithm 1]{tang2021pca} to price any option whose payoff is a function of the time average of the underlying asset price.

Algorithm~\ref{alg:asian-randomized-sampling} prices a discretely monitored Asian option on GBM using this approach. The same algorithm can be applied to approximate the continuously monitored case as well, however for continuously monitored Asian options on the GBM, it may be possible to obtain analytically closed form solutions. 
\begin{algorithm} [H]
\begin{algorithmic}[1]
\REQUIRE Parameters $(S_0, \mu, \sigma)$ for geometric Brownian motion $S_t$ on $t \in [0, 1]$. Strike price $K$ for the Asian option and number of time steps $T$ for discrete monitoring. Accuracy $\epsilon$. 
\ENSURE  An additive error $\epsilon$ estimate of the Asian option payoff function.  
\STATE Generate $L+1$ random samples $\bm{a} = (a_0, a_1, \dots, a_L)$ from the standard normal distribution.
\STATE Use the rejection sampling approach described above to sample $M_1= O(1/\epsilon^{2})$ points from $t \in [0, 1]$ with probability proportional to $G_L(\bm{a}, t)$.
\STATE Compute the average of $G_L(\bm{a}, t)$ at the sampled $t$ values and compute the pathwise payoff using this average, i.e,
\al{
    \en{\frac{1}{M_1} \sum_{j=1}^{M_1} G_L(\bm{a}, t_j) - K}^+.
}
\STATE Repeat steps 1--3 $M_0= O(1/\epsilon^{2})$ times, and compute the average of the pathwise payoff
\al{
    \frac{1}{M_0} \sum_{\bm{a}} \en{\frac{1}{M_1} \sum_{j=1}^{M_1} G_L(\bm{a}, t_j) - K}^+. \label{eq:asian-nested-mc-estimator}
}
This is our estimator for the Asian option price.
\end{algorithmic}
\caption{Quantum-inspired sampling for Asian option pricing} \label{alg:asian-randomized-sampling} 

\end{algorithm}

It is easy to see that the estimator given by Equation~\eqref{eq:asian-nested-mc-estimator} converges to the following value as $M_0 \to \infty$ and $M_1 \to \infty$
\al{
    \mathbb{E}_{\bm{a}}\enc{\en{\bar{G_L}(\bm{a}) - K}^+}, \label{eq:asian-nested-mc-true-mean}
}
where $\bar{G_L}(\bm{a}) = \int_{0}^{1} G_L(\bm{a}, t) dt$ is the mean of $G_L(\bm{a}, t)$ on $t \in [0, 1]$.

Since the estimator for the option price given by Equation~\eqref{eq:asian-nested-mc-estimator} involves a two-level nested Monte Carlo sampling, the mean squared error $\epsilon^2$ of the estimator w.r.t. its true mean~\eqref{eq:asian-nested-mc-true-mean} generally scales asymptotically as $O(\sigma^2/M_0 + \sigma^2_1/M_1)$, where $\sigma$ and $\sigma_1$ are the standard deviations of variables that the two estimators are estimating the means of, i.e., the payoff $\en{\bar{G_L}(\bm{a}) - K}^+$ and the time average $\bar{G_L}(\bm{a})$.
Therefore, to estimate Equation~\eqref{eq:asian-nested-mc-true-mean} up to $\epsilon$ RMSE, the total number of samples required in the above nested Monte Carlo simulation is $M_0 M_1 = O(\sigma/\epsilon^4)$, where we have dropped the factor $\sigma_1$ for conciseness and consistency with the analysis for other algorithms presented in this paper.
In the case where $T$ is large, 
\al{
\mathbb{E}_{\bm{a}} \left[ \left( \frac{1}{T} \sum_{t} G_L(\bm{a}, t) - K\right)^+\right] = \mathbb{E}_{\bm{a}}\enc{\en{\bar{G_L}(\bm{a}) - K}^+} + O\en{\frac{1}{T}}. \label{eq:asian-discrete-truncated-mean}
}
Therefore, when $T \gg 1/\epsilon$, the total number of samples required to estimate Equation~\eqref{eq:asian-discrete-truncated-mean} also scales as $O(\sigma/\epsilon^4)$.

Additionally, for each sample $t$, a degree $L$ polynomial needs to be evaluated. This would add an additional multiplicative factor of $L$ to the $M_0 M_1$ complexity.
Also note that a total of $(L+1)M_0$ Gaussian samples needs to be generated for the $M_0$ paths. However, this term is dominated by the $L M_0 M_1$ term above.
According to Theorem~\ref{thm:asian-kl-error}, we should choose $L$ to be $O(1/\epsilon^{2})$ in order to guarantee an approximation error of at most $\epsilon$ in the Asian payoff process from the truncated Wiener series. Therefore, the overall complexity of this algorithm is $O(\sigma/\epsilon^{6})$.

We note that the above convergence analysis for the nested Monte Carlo is very general, therefore it may very likely be too pessimistic for the specific problem we are considering and the actual convergence in practice may be much faster. 
Specifically, if we assume that inner estimator $\bar{G_L}(\bm{a})$ is approximately normally distributed, it can be shown that the mean squared error $\epsilon^2$ of the estimator~\eqref{eq:asian-nested-mc-estimator} w.r.t. its true mean~\eqref{eq:asian-nested-mc-true-mean} scale asymptotically as $O(\sigma^2/M_0 + \sigma_1^2/M_1^2)$ as opposed to $O(\sigma^2/M_0 + \sigma^2_1/M_1)$~\cite{hong2009estimating}. 
This consequently reduces the total number of samples required to estimate Equation~\eqref{eq:asian-nested-mc-true-mean} up to $\epsilon$ RMSE to $M_0 M_1 = O(\sigma^2/\epsilon^3)$, and the overall complexity of the pricing algorithm to $O(\sigma/\epsilon^{5})$.

In addition, one may also utilize variance reduction techniques such as multi-level Monte Carlo (MLMC) to reduce the number of samples required to achieve the same RMSE.
This will be discussed in more detail in Section~\ref{sec:mlmc}.

\subsection{Time-domain sub-sampling}
\label{sec:subsampling}

In this section, we attempt to analyze a time-domain sub-sampling-based algorithm for pricing certain path-dependent options.
The main result is that cost of generating one sample for the payoff may be assumed to scale as $O(1/\epsilon^{2})$ for the cases of Brownian motion and geometric Brownian motion, and as $\poly{1/\epsilon}$ under fairly general regularity assumptions on the stochastic process and the payoff. 

As in Section~\ref{sec:trunk}, we assume that the payoff is a function satisfying Assumption~\ref{asm:lipschitz-payoff}, and the underlying asset price $S_t$ is modeled by a function $g$ of another stochastic process $X_t$ with a Karhunen-Lo\`{e}ve expansion as specified in Assumption~\ref{asm:mapped-kl-process}.
Additionally, we make the following further assumption on the regularity of the stochastic process $X_t$.
\begin{assumption}
\label{asm:bounded-eigenfunctions}
    Let $k_X(s,t)$ be the Mercer kernel associated to stochastic process $X_t$ and $\mathcal{K}$ be the corresponding integral operator with eigenfunctions $\{e_k\}$ and eigenvalues $\{\lambda_k\}$. We assume the following:
    \begin{enumerate}
        \item The eigenfunctions $\{e_k\}$ are Lipschitz continuous on $[0,1]$,  i.e. $\forall k \in \mathbb{N} \; \exists G(k) \in \mathbb{R}_{+}$ such that $ \forall t,s \in [0,1], |e_k(t) - e_k(s)| \le G(k)\lvert t-s \rvert$.
        \item  $\lambda_{k}G(k)^{2} \leq C_\mathrm{M}, \forall k$ and for a universal constant $C_\mathrm{M}$.
    \end{enumerate}
\end{assumption}
Note that the second condition is met by Brownian motion, where the squared Lipschitz constant for the eigenfunctions is matched by the decay rate for the eigenvalues. Theorem~\ref{thm:specBM} shows that the Lipschitz constant for the Brownian motion Mercer kernel eigenfunctions is $O(k)$ and the eigenfunctions $\lambda_{k} = O(1/k^{2})$. Therefore, Assumption~\ref{asm:bounded-eigenfunctions} is satisfied for Brownian motion. 

The existence of the truncation index $L_{X}(\epsilon)$ ensures that the stochastic process $X_t$ is approximately Lipschitz continuous in expectation, in the following sense.
\begin{lemma}
    \label{lem:KL-smooth}
    Let $X_t$ be a stochastic process satisfying Assumption~\ref{asm:bounded-eigenfunctions} and $\epsilon > 0$ be any given error parameter. Let $L_X(\epsilon)$ be the corresponding truncation index as defined in Definition~\ref{def:trunc-index}. It then holds that,
    \begin{align}
        \mathbb{E} \enc{\en{X_t - X_s}^2} \le 3 C_\mathrm{M} L_X(\epsilon) (t - s)^2 + 6 \epsilon^2, \quad \forall t,s \in [0,1], 
    \end{align}
    where $C_\mathrm{M} = \sup_{k \in \mathbb{N}} \lambda_kG^2(k)$ is a universal constant as specified in Assumption~\ref{asm:bounded-eigenfunctions}.
\end{lemma}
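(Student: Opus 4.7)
The plan is to split $X_t$ into the truncated Karhunen--Lo\`{e}ve expansion at level $L_X(\epsilon)$ and its tail, bound the truncated part via the Lipschitz condition on eigenfunctions, and bound the tail directly via the definition of the truncation index. The main structural observation is that the factor of $3$ in the $(t-s)^2$ term and the factor of $6$ in the $\epsilon^2$ term both come naturally if we expand $(X_t-X_s)^2$ using the three-term inequality $(a+b+c)^2 \leq 3(a^2+b^2+c^2)$ rather than the two-term inequality.

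First I would write $X_t = X_t^{L} + R_t^{L}$ where $X_t^{L} = \sum_{k=1}^{L_X(\epsilon)} Z_k e_k(t)$ is the truncated KL expansion and $R_t^{L} = \sum_{k > L_X(\epsilon)} Z_k e_k(t)$ is the tail. Then
\begin{align*}
\mathbb{E}\!\left[(X_t-X_s)^2\right] = \mathbb{E}\!\left[\left((X_t^{L}-X_s^{L}) + R_t^{L} + (-R_s^{L})\right)^2\right] \le 3\,\mathbb{E}\!\left[(X_t^{L}-X_s^{L})^2\right] + 3\,\mathbb{E}[(R_t^{L})^2] + 3\,\mathbb{E}[(R_s^{L})^2].
\end{align*}
By the defining property of $L_X(\epsilon)$ in Definition~\ref{def:trunc-index}, each of the two tail terms is bounded by $\epsilon^2$, contributing a total of $6\epsilon^2$.

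Next I would handle the truncated piece using the orthogonality relation $\mathbb{E}[Z_iZ_j]=\delta_{ij}\lambda_j$ from Theorem~\ref{thm:kkl}, which gives
\begin{align*}
\mathbb{E}\!\left[(X_t^{L}-X_s^{L})^2\right] = \sum_{k=1}^{L_X(\epsilon)} \lambda_k (e_k(t)-e_k(s))^2 \leq \sum_{k=1}^{L_X(\epsilon)} \lambda_k\, G(k)^2 (t-s)^2 \leq C_\mathrm{M}\, L_X(\epsilon)\, (t-s)^2,
\end{align*}
where the first inequality uses the Lipschitz continuity of the eigenfunctions (part 1 of Assumption~\ref{asm:bounded-eigenfunctions}) and the second uses the uniform bound $\lambda_k G(k)^2 \leq C_\mathrm{M}$ (part 2 of the same assumption). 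Multiplying by $3$ and combining with the tail bound yields the claim.

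I do not anticipate a real obstacle: the only subtlety is making sure the uncorrelatedness of the $Z_k$ (which is what Theorem~\ref{thm:kkl} provides) is sufficient to kill the cross terms in $\mathbb{E}[(X_t^{L}-X_s^{L})^2]$, and it is, since we are only computing a second moment of a linear combination. The constants $3$ and $6$ are essentially tight for this simple triangle-style decomposition; tightening them would require more refined moment bounds that are unnecessary for the statement as written.
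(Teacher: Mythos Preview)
Your proposal is correct and matches the paper's proof essentially line for line: the same three-term split $(X'_t-X'_s)+(X_t-X'_t)-(X_s-X'_s)$ with the inequality $(a+b+c)^2\le 3(a^2+b^2+c^2)$, the same use of the truncation index to bound each tail by $\epsilon^2$, and the same orthogonality-plus-Lipschitz computation for the truncated piece. If anything, your handling of the cross terms is slightly cleaner, since you write the orthogonality step as an equality directly rather than as an inequality.
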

\begin{proof}
By the definition of truncation index, we have that $\mathbb{E}\left[\right(X'_t - X_t\left)^2\right] \le \epsilon^2$, where $X'_t = \sum_{k=0}^{L_X(\epsilon)} Z_k e_k(t)$. It then follows that
\al{
\mathbb{E} \enc{\en{X_t - X_s}^2} 
&= \mathbb{E} \enc{\en{X'_t - X'_s + X_t - X'_t + X_s - X'_s}^2} \\
&\le 3 \en{\mathbb{E} \enc{\en{X'_t - X'_s}^2} + \mathbb{E} \enc{\en{X_t - X'_t}^2} + \mathbb{E} \enc{\en{X_s - X'_s}^2} } \\
&\le 3 \en{\mathbb{E} \enc{\en{X'_t - X'_s}^2} + 2\epsilon^2},
}
where first inequality follows from Cauchy-Schwartz.
To bound the first term on the last line, we notice that
\begin{align}
    \mathbb{E} \enc{\en{X'_t - X'_s}^2}
    &= \mathbb{E}\left[\left(\sum_{k=0}^{L_X(\epsilon)} Z_k (e_k(t)-e_k(s))\right)^2\right] \\
    &\le \sum_{k,l=0}^{L_X(\epsilon)}\mathbb{E}(Z_k Z_l) (e_k(t)-e_k(s))(e_l(t) - e_l(s)) \\
    &\le \sum_{k=0}^{L_X(\epsilon)} \lambda_k (e_k(t) - e_k(s))^2 \\
    &\le \sum_{k=0}^{L_X(\epsilon)} \lambda_k G^{2}(k) (t - s)^2 \\
    &\le L_X(\epsilon) C_\mathrm{M} (t - s)^2,
\end{align}
where $C_\mathrm{M} = \sup_{k \in \mathbb{N}} \lambda_kG^2(k)$ is a universal constant per Assumption~\ref{asm:bounded-eigenfunctions}.
It then follows that
\al{
\mathbb{E} \enc{\en{X_t - X_s}^2} \le 3 C_{M} L_X(\epsilon) (t - s)^2 + 6 \epsilon^2,
}
hence completing the proof. 
\end{proof}
\noindent In particular, for the case of Brownian motion $B_t$, the validity of Assumption~\ref{asm:bounded-eigenfunctions} follows from Theorem~\ref{thm:specBM}. 
Applying Lemma~\ref{lem:KL-smooth} with $L_B(\epsilon)$ from Theorem~\ref{thm:LXBrownian} and noting that $C_\mathrm{M} = 1$ we have $\mathbb{E} \enc{\en{B_t - B_s}^2} \le 3(t-s)^2/\epsilon^2 + 6 \epsilon^2$. 

With Assumption~\ref{asm:mapped-kl-process} and particularly Assumption~\ref{asm:sub-exp-lipschitz-func} enclosed therein, we now generalize Lemma~\ref{lem:KL-smooth} to the stochastic process $S_t$ for the underlying asset price.
To do that, we need another assumption on the smoothness of $g_t$ in $t$.
\begin{assumption}
\label{asm:lipschitz-func-time}
The function $g_t(x)$ is Lipschitz continuous in $t$ with Lipschitz constant $K'_g$.
\end{assumption}
\noindent And the generalized lemma is stated as follows.
\begin{lemma}
    \label{lem:func-of-KL-smooth}
    Let $S_t=g_t(X_t)$ be a stochastic process as specified in Assumption~\ref{asm:mapped-kl-process} and additionally assume that $X_t$ and $g_t$ satisfy Assumptions~\ref{asm:bounded-eigenfunctions} and \ref{asm:lipschitz-func-time} respectively. Let $\epsilon > 0$ be any given error parameter and $L_X(\epsilon)$ be the truncation index as defined in Definition~\ref{def:trunc-index} for $X_t$. It then holds that,
    \begin{align}
        \mathbb{E} \enc{\en{S_t - S_s}^2} \le C_3\en{ \en{L_X(\epsilon) + {K'_g}^2} (t - s)^2 + \epsilon^2}, \quad \forall t,s \in [0,1], 
    \end{align}
    where $C_3$ is a universal constant. 
\end{lemma}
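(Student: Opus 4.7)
The plan is to mirror the structure of the proof of Lemma~\ref{lem:KL-smooth} but now carry the cost of replacing $X_t$ by $g_t(X_t)$. First I would decompose via an intermediate point,
\begin{align*}
S_t - S_s \;=\; \bigl(g_t(X_t) - g_t(X_s)\bigr) \;+\; \bigl(g_t(X_s) - g_s(X_s)\bigr),
\end{align*}
and apply $(a+b)^2 \le 2a^2 + 2b^2$ to split the mean squared error into two pieces. The second piece is immediate from Assumption~\ref{asm:lipschitz-func-time}: it is at most $2(K'_g)^2 (t-s)^2$, pointwise in $\omega$ and hence in expectation. The first piece is the interesting one, where Assumption~\ref{asm:sub-exp-lipschitz-func} on $g_t$ gives
\begin{align*}
\mathbb{E}\bigl[(g_t(X_t)-g_t(X_s))^2\bigr] \;\le\; \mathbb{E}\bigl[(\exp(c_g X_t)-\exp(c_g X_s))^2\bigr] \;+\; K_g^2\,\mathbb{E}\bigl[(X_t-X_s)^2\bigr],
\end{align*}
where I used $\max(u,v)^2 \le u^2 + v^2$ for nonnegative $u,v$.

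Next I would handle the two terms on the right separately. The Lipschitz part is bounded directly by Lemma~\ref{lem:KL-smooth}, giving $K_g^2\bigl(3 C_{\mathrm M} L_X(\epsilon)(t-s)^2 + 6\epsilon^2\bigr)$, which already fits the target form. The work is in the exponential part. I would use the mean value theorem to write
\begin{align*}
\bigl(\exp(c_g X_t)-\exp(c_g X_s)\bigr)^2 \;\le\; c_g^2\,(X_t-X_s)^2\,e^{2 c_g X_s}\,e^{2 c_g |X_t-X_s|},
\end{align*}
and then apply Cauchy--Schwarz in expectation to separate the polynomial factor $(X_t-X_s)^2$ from the exponential moment generating factors:
\begin{align*}
\mathbb{E}\bigl[(\exp(c_g X_t)-\exp(c_g X_s))^2\bigr] \;\le\; c_g^2 \sqrt{\mathbb{E}[(X_t-X_s)^4]}\cdot\sqrt{\mathbb{E}\bigl[e^{4 c_g X_s}\,e^{4 c_g |X_t-X_s|}\bigr]}.
\end{align*}
The second square root is a universal constant: the sub-Gaussianity assumption on the $Z_k$ in Assumption~\ref{asm:mapped-kl-process} makes $X_s$ and $X_t-X_s$ sub-Gaussian with variance proxies uniformly bounded on $[0,1]$, and one further Cauchy--Schwarz splits the product of exponentials into two finite sub-Gaussian MGFs.

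For the first square root, the same sub-Gaussian control applied to $X_t-X_s$, combined with the variance proxy estimate $3C_{\mathrm M} L_X(\epsilon)(t-s)^2 + 6\epsilon^2$ from Lemma~\ref{lem:KL-smooth}, yields $\mathbb{E}[(X_t-X_s)^4] = O\bigl((L_X(\epsilon)(t-s)^2 + \epsilon^2)^2\bigr)$ via the standard fourth-moment bound for sub-Gaussian variables. Taking the square root gives $O(L_X(\epsilon)(t-s)^2+\epsilon^2)$. Assembling the three contributions, absorbing all constants into a single universal $C_3$, and noting that the $t$-Lipschitz term contributes the $(K'_g)^2(t-s)^2$ summand, I obtain the stated bound. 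The main technical obstacle is the last step: propagating the sub-Gaussianity of the KL coefficients $Z_k$ to a clean variance-proxy bound for $X_t-X_s$ that scales like $L_X(\epsilon)(t-s)^2+\epsilon^2$, since Lemma~\ref{lem:KL-smooth} only controls the second moment; this requires re-deriving the proof of Lemma~\ref{lem:KL-smooth} at the level of sub-Gaussian norms rather than variances, which is routine but must be done carefully to avoid losing factors that depend on $L_X(\epsilon)$ unfavorably.
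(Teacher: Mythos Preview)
Your proof is correct and follows the same architecture as the paper: decompose $S_t - S_s = (g_t(X_t)-g_t(X_s)) + (g_t(X_s)-g_s(X_s))$, bound the second piece by Assumption~\ref{asm:lipschitz-func-time}, and control the first via Assumption~\ref{asm:sub-exp-lipschitz-func} together with Lemma~\ref{lem:KL-smooth}. The only substantive difference is in how the exponential contribution $\mathbb{E}\bigl[(\exp(c_g X_t)-\exp(c_g X_s))^2\bigr]$ is handled. The paper dispatches this in one line by invoking Theorem~\ref{thm:mapped-kl-truncation-error} (resting on Lemma~\ref{lem:mapped-proc-bound}) with the role of the ``error variance'' $\epsilon^2$ played by the bound from Lemma~\ref{lem:KL-smooth}; you instead unpack the step explicitly via the mean value theorem, Cauchy--Schwarz, and sub-Gaussian fourth-moment bounds. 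Your route is longer but more self-contained, and it does not rely on the independence hypothesis of Lemma~\ref{lem:mapped-proc-bound} (which the paper does not verify for the pair $X_s$, $X_t-X_s$ in general); the paper's route is terser but leans on that earlier lemma in a setting slightly broader than where it was literally proved.
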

\begin{proof}
By the construction of $S_t$, we have
\al{
\mathbb{E} \enc{\en{S_t - S_s}^2} 
&= \mathbb{E} \enc{\en{g_t(X_t) - g_s(X_s)}^2} \\
&= \mathbb{E} \enc{\en{g_t(X_t) - g_t(X_s) + g_t(X_s) - g_s(X_s)}^2} \\
&\le 2\mathbb{E} \enc{\en{g_t(X_t) - g_t(X_s)}^2} + 2\mathbb{E} \enc{\en{g_t(X_s) - g_s(X_s)}^2}.
}
We can bound the first term on the r.h.s. of the inequality above using Theorem~\ref{thm:mapped-kl-truncation-error} and Lemma~\ref{lem:KL-smooth}
\al{
\mathbb{E} \enc{\en{g_t(X_t) - g_t(X_s)}^2} \le C_2 \en{3 C_{M} L_X(\epsilon) (t - s)^2 + 6 \epsilon^2}.
}
Additionally, since $g_t$ is $K'_g$-Lipschitz in $t$,
\al{
\mathbb{E} \enc{\en{g_t(X_s) - g_s(X_s)}^2} \le {K'_g}^2 (t - s)^2.
}
Therefore
\al{
\mathbb{E} \enc{\en{S_t - S_s}^2} &\le 2 C_2 \en{3 C_{M} L_X(\epsilon) (t - s)^2 + 6 \epsilon^2} + 2{K'_g}^2 (t - s)^2 \\
&= C_3\en{ \en{L_X(\epsilon) + {K'_g}^2} (t - s)^2 + \epsilon^2},
}
where $C_3 = \max(6C_2 \max (C_M, 2),2)$, completing the proof.

\end{proof}
In particular, the geometric Brownian motion modeled by exponentiating the KL expansion of the Brownian motion (i.e., the Wiener series) satisfies Assumptions~\ref{asm:mapped-kl-process}, \ref{asm:bounded-eigenfunctions} and \ref{asm:lipschitz-func-time}, and hence has the smoothness property given by Lemma~\ref{lem:func-of-KL-smooth}.

With the above assumptions and lemmas, we can now describe a time-domain sub-sampling algorithm with $M < T$ points as follows: suppose the payoff function depends on times $\{t_1,\dots,t_T\}$ and define $c(t)$ for any $t \in [0,1]$ by rounding down to an element of set $\{i/M \mid 0 \le i < M\}$, that is $c(t) = \lfloor tM \rfloor/M$. 
The sub-sampling approach uses the estimator $f(S_{c(t_1)},\dots,S_{c(t_T)})$ for $\mathbb{E}\left[f(S_{t_1},\dots,S_{t_T})\right]$. 
Our main result is the following:
\begin{theorem}\label{thm:sub_s}
Let $f$ be a payoff function that is dependent on $T$ discrete points and satisfies Assumption~\ref{asm:lipschitz-payoff}.
Let $S_t = g_t(X_t)$ be the stochastic process on which the payoff is evaluated, satisfying Assumption~\ref{asm:mapped-kl-process}, with $X_t$ and $g_t$ additionally satisfying Assumptions~\ref{asm:bounded-eigenfunctions} and \ref{asm:lipschitz-func-time} respectively. 
A sub-sampling estimator with $M=\max \en{\frac{\sqrt{L_X(\epsilon)}}{\epsilon}, \frac{{K'_g}^2}{\epsilon^2}}$ points as described above, is an estimator for $\mathbb{E}\left[f(S_{t_1},\dots,S_{t_T})\right]$ with a mean squared error of $O(\epsilon^2)$.
\end{theorem}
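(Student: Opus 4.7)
The plan is to reduce the theorem to two already-proved ingredients: the payoff-to-process reduction in Lemma~\ref{lem:asian-kl-error} and the temporal regularity of $S_t$ established in Lemma~\ref{lem:func-of-KL-smooth}. Interpreting the stated MSE as $\mathbb{E}[(f(S_{c(t_1)},\ldots,S_{c(t_T)}) - f(S_{t_1},\ldots,S_{t_T}))^2]$, the proof reduces to a two-step $\mathcal{L}^2$ estimate followed by a choice of $M$ large enough to absorb the discretization error.

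First, I would exploit the definition $c(t) = \lfloor tM \rfloor/M$ to conclude $|t_i - c(t_i)| \le 1/M$ uniformly in $i$. Applying Lemma~\ref{lem:func-of-KL-smooth} to each pair $(t_i, c(t_i))$ then yields, uniformly in $i$,
\begin{align*}
\mathbb{E}\left[(S_{t_i} - S_{c(t_i)})^2\right] \le C_3\left(\frac{L_X(\epsilon) + {K'_g}^2}{M^2} + \epsilon^2\right).
\end{align*}
Because this bound is independent of $i$, I would then invoke Lemma~\ref{lem:asian-kl-error} with the two processes $\{S_{t_i}\}$ and $\{S_{c(t_i)}\}$ (replacing the $\epsilon^2$ in the statement of that lemma by the right-hand side above) to pass from a pointwise $\mathcal{L}^2$ bound on trajectories to the corresponding bound on payoffs, obtaining
\begin{align*}
\mathbb{E}\left[\left(f(S_{t_1},\ldots,S_{t_T}) - f(S_{c(t_1)},\ldots,S_{c(t_T)})\right)^2\right] \le C_3\left(\frac{L_X(\epsilon) + {K'_g}^2}{M^2} + \epsilon^2\right).
\end{align*}

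To force this upper bound to be $O(\epsilon^2)$, it suffices to require both $L_X(\epsilon)/M^2 \lesssim \epsilon^2$ and ${K'_g}^2/M^2 \lesssim \epsilon^2$. The first requirement yields $M \gtrsim \sqrt{L_X(\epsilon)}/\epsilon$ and the second yields $M \gtrsim K'_g/\epsilon$. The stated choice $M = \max(\sqrt{L_X(\epsilon)}/\epsilon,\, {K'_g}^2/\epsilon^2)$ dominates both in the small-$\epsilon$ regime (whenever $K'_g \ge \epsilon$) and so is sufficient to drive the chained bound to $O(\epsilon^2)$.

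The main obstacle I anticipate is bookkeeping rather than a genuine technical difficulty: one must confirm that the $\epsilon^2$ floor appearing in Lemma~\ref{lem:func-of-KL-smooth} depends only on the KL truncation index $L_X(\epsilon)$ and not on $M$, so that the discretization parameter $M$ can be freely selected after $\epsilon$ has been fixed, and that the constant $C_3$ inherited from Lemma~\ref{lem:func-of-KL-smooth} (which in turn inherits from Theorem~\ref{thm:mapped-kl-truncation-error} and Assumption~\ref{asm:sub-exp-lipschitz-func}) is indeed independent of $t$, $s$, and $M$. Once these two bookkeeping points are checked, the theorem follows by a direct chaining of Lemma~\ref{lem:func-of-KL-smooth} and Lemma~\ref{lem:asian-kl-error} with the prescribed $M$.
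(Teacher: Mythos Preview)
Your proposal is correct and follows essentially the same approach as the paper: both proofs chain Lemma~\ref{lem:func-of-KL-smooth} (pointwise $\mathcal{L}^2$ regularity of $S_t$) with Lemma~\ref{lem:asian-kl-error} (payoff-to-process reduction), then choose $M$ large enough to make the discretization term $O(\epsilon^2)$. Your observation that $M \gtrsim K'_g/\epsilon$ would already suffice for the second requirement, with the stated ${K'_g}^2/\epsilon^2$ being a (possibly generous) dominating choice in the regime $K'_g \ge \epsilon$, is a slightly sharper remark than the paper makes explicit.
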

\begin{proof}
From Lemmas~\ref{lem:asian-kl-error} and \ref{lem:func-of-KL-smooth}, we have
\al{
\mathbb{E}\left[\en{f(S_{c(t_1)},\dots,S_{c(t_T)}) -   f(S_{t_1},\dots,S_{t_T})}^2\right] \le C_3\left(\en{L_X(\epsilon)+{K'_g}^2} ( t_i - c(t_i) )^2 + \epsilon^2\right).
}
This implies that if the sub-sampling estimator uses $\max \en{\frac{\sqrt{L_X(\epsilon)}}{\epsilon}, \frac{{K'_g}^2}{\epsilon^2}}$ equally spaced points in time, we have for all $1 \le i \le T, \lvert t_i - c(t_i) \rvert \le \min \en{\frac{\epsilon}{\sqrt{L_X(\epsilon)}}, \frac{\epsilon^2}{{K'_g}^2}}$, and consequently
\al{
C_3\left(\en{L_X(\epsilon)+{K'_g}^2} ( t_i - c(t_i) )^2 + \epsilon^2\right) \le 3C_3\epsilon^2,
}
completing the proof.
\end{proof}

There are only $M$ distinct values to be summed up to evaluate the estimator for the payoff $f$ on each path. 
We make a further natural assumption that the stochastic process is \emph{fast-forwardable}.
In other words, we assume that the process is Markovian and that a closed-form distribution is known for any future time point given the value at the current time. 
For such processes, each estimator for $f$ can be evaluated in time $O(M)$.
By using classical Monte Carlo methods, an estimator for the mean of the payoff (i.e., the option price) can be obtained by taking $O(1/\epsilon^2)$ samples of the estimator for $f$, to achieve a mean squared error of $O(\epsilon^2)$.
We present this result formally in the following theorem.
\begin{theorem} \label{thm:MCsample}
There is a classical Monte Carlo algorithm with complexity $\max \en{\frac{\sqrt{L_X(\epsilon)}}{\epsilon^3}, \frac{{K'_g}^2}{\epsilon^4}}$ for estimating $E[f(S_{t})]$ to mean squared error $\epsilon^2$ for fast-forwardable stochastic process $S_t$ satisfying Assumption~\ref{asm:mapped-kl-process} and continuous payoff function $f$ satisfying Assumption~\ref{asm:lipschitz-payoff}.
\end{theorem}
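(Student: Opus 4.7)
The plan is to build the estimator in two stages: first replace the path-dependent quantity $f(S_{t_1},\dots,S_{t_T})$ by its sub-sampled version $\widehat{Y} := f(S_{c(t_1)},\dots,S_{c(t_T)})$ using $M = \max\bigl(\sqrt{L_X(\epsilon)}/\epsilon,\, {K'_g}^2/\epsilon^2\bigr)$ equally spaced rounded times, and then estimate the expectation of $\widehat{Y}$ by a standard classical Monte Carlo average $\widehat{\mu} = \tfrac{1}{N}\sum_{i=1}^N \widehat{Y}^{(i)}$ over $N$ i.i.d.\ sample paths of $S_t$. By Theorem~\ref{thm:sub_s}, the choice of $M$ gives the per-path bound $\mathbb{E}[(\widehat{Y}-Y)^2] = O(\epsilon^2)$ where $Y = f(S_{t_1},\dots,S_{t_T})$.

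Next I would carry out a standard bias--variance decomposition. By Jensen's inequality, the bias satisfies $(\mathbb{E}[\widehat{Y}] - \mathbb{E}[Y])^2 \le \mathbb{E}[(\widehat{Y}-Y)^2] = O(\epsilon^2)$, and the variance is controlled by $\mathrm{Var}(\widehat{Y}) \le 2\mathrm{Var}(Y) + 2\mathbb{E}[(\widehat{Y}-Y)^2] = O(\sigma^2 + \epsilon^2)$ where $\sigma$ is the (assumed bounded) standard deviation of the true payoff. Hence
\begin{equation*}
\mathbb{E}\bigl[(\widehat{\mu}-\mathbb{E}[Y])^2\bigr] \;=\; (\mathbb{E}[\widehat{Y}]-\mathbb{E}[Y])^2 + \tfrac{1}{N}\mathrm{Var}(\widehat{Y}) \;=\; O(\epsilon^2) + O(1/N),
\end{equation*}
so taking $N = O(1/\epsilon^2)$ forces the total MSE to be $O(\epsilon^2)$.

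For the per-sample cost, I would invoke the fast-forwardable assumption: since $S_t$ is Markovian with closed-form transition law, a single trajectory can be sampled at the $M$ distinct rounded grid points $\{j/M\}_{j=0}^{M-1}$ in time $O(M)$ by generating $S_{0}, S_{1/M}, S_{2/M}, \dots$ sequentially from their known conditional distributions. Once the $M$ values are in hand, each of the $T$ arguments $S_{c(t_i)}$ of the payoff is one of these $M$ values, and for payoffs of interest (e.g.\ Asian-style averages, or any $f$ that aggregates its arguments in an associative/count-based fashion compatible with Assumption~\ref{asm:lipschitz-payoff}) the evaluation of $\widehat{Y}$ reduces to combining these $M$ distinct values with their multiplicities, which is $O(M)$. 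Multiplying $N = O(1/\epsilon^2)$ samples by the $O(M)$ per-sample cost yields complexity $O(M/\epsilon^2) = O\bigl(\max(\sqrt{L_X(\epsilon)}/\epsilon^3,\, {K'_g}^2/\epsilon^4)\bigr)$, matching the claim.

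The main obstacle I anticipate is not the Monte Carlo argument, which is routine, but rather the evaluation-cost step: justifying that the aggregation of $f$ across $T$ arguments taking only $M$ distinct values can indeed be done in $O(M)$ time. This is immediate for the Asian-style payoffs that motivate the paper (group identical arguments by their multiplicities and feed into the arithmetic mean), and more generally whenever $f$ admits such a succinct description of how it depends on repeated arguments; one should therefore state this as a mild structural assumption on $f$ (consistent with the paper's use of it) rather than attempt to derive it from Assumption~\ref{asm:lipschitz-payoff} alone.
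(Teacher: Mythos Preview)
Your proposal is correct and follows essentially the same approach as the paper: invoke Theorem~\ref{thm:sub_s} for the $O(\epsilon^2)$ bias from sub-sampling with $M$ grid points, then take $N=O(1/\epsilon^2)$ Monte Carlo samples each costing $O(M)$ via fast-forwardability. Your bias--variance decomposition adds welcome rigor over the paper's informal argument, and you are right to flag the $O(M)$ aggregation cost as an implicit structural assumption on $f$ that the paper simply asserts (``there are only $M$ distinct values to be summed'') rather than derives.
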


On the other hand, a pricing algorithm based on KL expansion similar to Algorithm~\ref{alg:asian-randomized-sampling} requires $O(L_X(\epsilon)/\epsilon^4)$ complexity.
Theorem~\ref{thm:MCsample} then implies that the time-domain sub-sampling approach is more efficient in such cases.

In particular, Theorem~\ref{thm:MCsample} applies to the Asian option pricing problem on $t \in [0,1]$, where the underlying stochastic process $S_t$ is a geometric Brownian motion approximated by exponentiating the KL expansion of the Brownian motion $X_t$.
In that case, the truncation index at error $\epsilon$ for the Brownian motion $X_t$ is $L_X(\epsilon) = O(1/\epsilon^2)$ as given by Theorem~\ref{thm:LXBrownian}. 
Therefore, using $O(1/\epsilon^2)$ grid points in the time domain is sufficient to approximate the time average of $S(t)$ up to an error of at most $\epsilon$.
This yields a classical Monte Carlo algorithm with complexity $\widetilde{O}(1/\epsilon^4)$ for pricing an Asian option on an underlying asset modeled by GBM. The algorithm is presented in Algorithm~\ref{alg:asian-subsampling}.

\begin{algorithm} [H]
\begin{algorithmic}[1]
\REQUIRE Parameters $(S_0, \mu, \sigma)$ for geometric Brownian motion $S_t$ on $t \in [0, 1]$. Strike price $K$ for the Asian option and number of time steps $T$ for discrete monitoring. Accuracy $\epsilon$. 
\ENSURE  A root mean squared error $\epsilon$ estimate of the Asian option payoff function. 
\STATE Construct an $(L+1)$-point grid in the time domain $[0, 1]$ denoted by time points $t_0=0, t_1=1/L, \dots, t_L=1$, where $L = 1/\epsilon^{2}$.
\STATE For each $k \in [L]$, generate a sample for $S_k$ by sampling from $\mathrm{Lognormal}\en{\en{\mu-\frac{\sigma^2}{2}}t_k, \sigma^2 t_k}$ and then multiply by $S_0$.
The generated $S_1, \dots, S_L$ form a sample path for the underlying asset price.
\STATE Evaluate the Asian option payoff function on the generated sample path $f(S_1, \dots, S_L) = \en{\frac{1}{L}\sum_{k\in[L]} S_k - K}^+$. 
Note that the payoff function has been (trivially) modified to take $L$ values instead of $T$ values in time.
\STATE Repeat steps 2 and 3 to generate $O(1/\epsilon^2)$ samples for the payoff $f(S_1, \dots, S_L)$, and compute the average of these samples.
This is our estimator for the price of the Asian option.

\end{algorithmic}
\caption{Time-domain sub-sampling for Asian option pricing} \label{alg:asian-subsampling}
\end{algorithm}

\section{Quantized time-domain sub-sampling}

\label{sec:quantized-algorithms}

In this section, we present a quantum Asian option pricing algorithm that is obtained by quantizing the classical Monte Carlo method described above in Theorem~\ref{thm:MCsample}. The quantum algorithm given as algorithm \ref{A4} has complexity $\widetilde{O}(1/\epsilon^{3})$ improving upon the quantum Asian option pricing algorithm using the semi-digital encodings.

The correctness of the algorithm \ref{A4} follows from the analysis for the classical Monte Carlo algorithm. The estimator in step 4 is biased with error $\epsilon$, an amplitude estimation with additive error $\epsilon$ in step 5 therefore recovers  $E_{\bm{a}} [ (\overline{G}_{\mu, \sigma}(a)- K)^{+}] $ to additive error $2\epsilon$. 

The number of qubits used by the algorithm is $\widetilde{O}(M)= O(1/\epsilon^{2})$ assuming all the calculations are carried out 
at constant precision. The number of arithmetic operations needed for step $2$ for exponentiation and maintaining a running average of the discretized Brownian motion is $O(M)$. The overall arithmetic complexity of steps 1-4 is therefore $\widetilde{O}(1/\epsilon^{2})$. Amplitude estimation in step 5 adds a multiplicative $O(1/\epsilon)$ overhead to the complexity. The resource requirements for the quantum algorithm are given by the following theorem. 

\begin{theorem} 
There is a quantum algorithm for estimating the Asian option payoff function $E_{\bm{a}} [ (\overline{G}_{\mu, \sigma}(a)- K)^{+}] $ to additive error $2\epsilon$ over geometric Brownian motion trajectories, the algorithm uses $\widetilde{O}(1/\epsilon^{2})$ qubits and has gate complexity $\widetilde{O}(1/\epsilon^{3})$. 
\end{theorem}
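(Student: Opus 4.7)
The plan is to quantize the classical time-domain sub-sampling Monte Carlo algorithm (Algorithm~\ref{alg:asian-subsampling}), replacing only the outer Monte Carlo loop over sample paths with amplitude estimation while keeping the inner discretization to $M = O(1/\epsilon^{2})$ time points justified by Theorem~\ref{thm:MCsample}. Since the classical sub-sampling estimator already achieves an $\epsilon$ bias in the payoff with $M = O(1/\epsilon^{2})$ grid points for the geometric Brownian motion (where $L_X(\epsilon) = O(1/\epsilon^{2})$ and the time-Lipschitz constant $K'_g$ is $O(1)$), a quantum speedup on the outer loop via amplitude estimation directly gives a factor of $1/\epsilon$ improvement over the $1/\epsilon^{2}$ classical sample complexity.

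First I would prepare $M$ independent Gaussian states via Theorem~\ref{t1}, each using $\widetilde{O}(1)$ qubits and gates, encoding the i.i.d.\ Brownian increments on the uniform grid $\{0, 1/M, \dots, 1\}$. Next, a sequence of $M$ coherent arithmetic steps builds up, in streaming fashion across $k = 1, \dots, M$, the partial sum $B(t_k) = \sum_{j \leq k} (1/\sqrt{M}) a_j$, its exponential $G_{\mu,\sigma}(t_k) = S_0 \exp(\sigma B(t_k) + (\mu - \sigma^{2}/2) t_k)$, and a running sum whose final value is $\overline{G}_{\mu,\sigma}(\bm{a}) = (1/M) \sum_k G_{\mu,\sigma}(t_k)$. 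A final arithmetic block coherently computes $(\overline{G}_{\mu,\sigma}(\bm{a}) - K)^{+}/G_{\max}$ and loads it into the amplitude of an auxiliary qubit via a controlled rotation, analogously to step~4 of Algorithm~\ref{alg:AsianOption} but applied to the sub-sampled path average rather than a full-grid KL expansion.

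Applying amplitude estimation to the resulting unitary $U$ with additive error $\epsilon$ in the measured amplitude then yields an estimate of $\mathbb{E}_{\bm{a}}[(\overline{G}_{\mu,\sigma}(\bm{a}) - K)^{+}]$ to additive error $\epsilon$ after rescaling by the known constant $G_{\max}$, using $O(1/\epsilon)$ invocations of $U$. Each invocation of $U$ uses $\widetilde{O}(M) = \widetilde{O}(1/\epsilon^{2})$ qubits (one Gaussian register per grid point plus $O(\polylog(1/\epsilon))$ bits for the arithmetic scratch) and $\widetilde{O}(1/\epsilon^{2})$ gates for the $M$ streaming steps, giving overall gate complexity $\widetilde{O}(1/\epsilon^{2}) \cdot O(1/\epsilon) = \widetilde{O}(1/\epsilon^{3})$ and qubit count $\widetilde{O}(1/\epsilon^{2})$, matching the theorem.

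The main obstacle is to bookkeep three error sources so that their sum is at most $2\epsilon$: (i) the sub-sampling bias, bounded by $\epsilon$ via Theorem~\ref{thm:MCsample} together with Lemma~\ref{lem:func-of-KL-smooth}; (ii) the coherent-arithmetic truncation in the cumulative sum, exponentiation, averaging, and controlled rotation, which is driven below $\epsilon$ by using $O(\polylog(1/\epsilon))$ bits of precision and hence is absorbed into the approximation quality of $U$; and (iii) the $\epsilon$ additive error from amplitude estimation on the outer expectation. Since (i) and (iii) add directly and (ii) is negligible, the final additive error is at most $2\epsilon$. The only delicate bookkeeping is tracking the polylog overheads from Gaussian state preparation and coherent exponentiation to ensure they are subsumed into the $\widetilde{O}$ notation, which is standard.
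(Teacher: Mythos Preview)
Your proposal is correct and follows essentially the same approach as the paper's own proof: both quantize the classical time-domain sub-sampling estimator by preparing $M=O(1/\epsilon^{2})$ independent Gaussian states for the Brownian increments, coherently computing the running Brownian motion, the GBM values, and their time-average, loading the thresholded payoff into an amplitude, and then applying amplitude estimation with $O(1/\epsilon)$ iterations. Your error decomposition into the $\epsilon$ sub-sampling bias (via Theorem~\ref{thm:MCsample}) plus the $\epsilon$ amplitude-estimation error, with arithmetic truncation absorbed into polylog factors, is exactly the paper's accounting, and your resource counts ($\widetilde{O}(1/\epsilon^{2})$ qubits, $\widetilde{O}(1/\epsilon^{3})$ gates) match.
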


\begin{algorithm} [H]
\begin{algorithmic}[1]
\REQUIRE Number of qubits $n$ used for Gaussian state preparation, parameters $\mu, \sigma$ for GBM, accuracy $\epsilon$ and Strike price $K$ for the Asian option. 
\ENSURE  An additive error $\epsilon$ estimate of $\mathbb{E}[(\frac{1}{T} \sum S(t) - K)^{+}]$ where the expectation is over geometric Brownian motion paths $G_{\mu, \sigma}$. 
\STATE Let $M= O(\sqrt{L_{\epsilon}(X)}/\epsilon) = O(1/\epsilon^{2})$, prepare $M$ independent copies of the standard normal Gaussian state
on $N= 2^{n}$ qubits using the method of \cite{BGM22}. 

\al{ 
\ket{G}^{\otimes (M+1)} = \bigotimes_{k=0}^{M} \left( \sum_{a_k=-N/2}^{N/2} \sqrt{ p_k(a_k) }\ket{a_k} \right) 
=  \sum_{\bm{a}} \sqrt{ p(\bm{a}) } \ket{\bm{a}}.
}
\STATE Append two auxiliary registers and compute coherently on the first the running value for the Brownian motion $B(t) = \frac{1}{M} \sum a_{m}$ and on the second the running average for the GBM $G_{\mu, \sigma}(t) = S_{0} exp(\sigma B(t) + (\mu - \sigma^{2}/2) t)$. Uncompute the value $\ket{B(1)}$ in the first register to obtain, 

\al{ 
 \sum_{\bm{a}} \sqrt{ p(\bm{a}) } \ket{\bm{a}}  \ket{ \overline{G}_{\mu, \sigma}(a) } .
}
where $\overline{G}_{\mu, \sigma}(a)= \frac{1}{M} \sum_{i}  G_{\mu, \sigma} (i/M)$.

\STATE Threshold against the strike price $K$ and compute the positive part to obtain,  
\al{ 
  \sum_{\bm{a}} \sqrt{ p(\bm{a}) } \ket{\bm{a}}  \ket{ (\overline{G}_{\mu, \sigma}(a)  - K)^{+} } .
}

\STATE Append extra qubit and apply a conditional rotation to get the 
payoff function into the amplitude. Uncompute the $\ket{ (\overline{G}_{\mu, \sigma}(a)  - K)^{+} } $ to obtain, 
\al{ 
 \sum_{\bm{a}} \sqrt{ p(\bm{a}) } \ket{\bm{a}}  
 \left ( \sqrt{ \frac{(\overline{G}_{\mu, \sigma}(a)- K)^{+}}{G_{max}}}  \ket{0} + \en{ 1 - \frac{(\overline{G}_{\mu, \sigma}(a)- K)^{+}}{G_{max}}}^{1/2}   \ket{1} \right )   .
}
where $G_{max}$ is a normalizing factor to ensure that the amplitudes have norm at most $1$. 
\STATE Perform amplitude estimation to estimate the probability that the auxiliary qubit is in state $\ket{0}$ to recover an additive error $\epsilon$ estimate of the payoff $E_{\bm{a}} [ (\overline{G}_{\mu, \sigma}(a)- K)^{+}] $.

\end{algorithmic}
\caption{Quantum algorithm for Asian option pricing over geometric Brownian motion trajectories.} \label{A4}
\end{algorithm}

\section{Discussion}
\label{sec:discussion}

We have proposed a new type of encoding for stochastic processes, namely the semi-digital encoding and given quantum algorithms to generate semi-digital encodings for exponentiated Gaussian processes using the KL expansion. We have shown how our quantum algorithm can price discretely monitored Asian options. The approach can be extended to pricing of exotic, path dependent options such as Barrier Options. 
However, while it is easy to come up with quantum algorithms for these more exotic options using semi-digital encodings, more detailed analysis is required to compare with classical algorithms and to determine the extent of the quantum speedup.

A natural extension of our work would be to devise analogous algorithms for more sophisticated and realistic asset price models, such as L\'{e}vy processes, stable processes, the exponential Ornstein-Uhlenbeck process, exponentiated fractional Brownian motion, stochastic volatility models, just to name a few. One would naturally need to define a semi-digital encoding for these processes as a first step. 
One might also consider barrier options with a double barrier, a time-dependent barrier, lookback options, or Asian options with a geometric rather than arithmetic average. 

In the rest of this section, we discuss several aspects around the practicalities of the algorithms discussed in this paper, and the open problems that they pose.

\subsection{Practicalities of the Black-Scholes model}

We studied option pricing where the underlying asset is modeled by a smooth function of a sub-Gaussian process as stated in Assumption~\ref{asm:mapped-kl-process}.
We specifically focused on the geometric Brownian motion corresponding to solutions of the Black-Scholes equation. 
Despite its convenient closed-form solution, the Black-Scholes model suffers from many limitations. 
Some of these limitations arise from the assumption of constant volatility $\sigma$ in the underlying stochastic process, which fails to explain the empirically observed phenomena of volatility clustering, volatility smile, and volatility term structure.
The assumption of log-normal and independent returns is also often challenged by empirical data. For example, the Black-Scholes model results in consistent erroneous option prices when heavy tails are present in the underlying asset price distribution. Notwithstanding its aforementioned limitations, the Black-Scholes model is still one of the most widely used models in quantitative finance due to its simplicity and its well-established popularity in the construction and delta hedging of simple option products.
Therefore, the Black-Scholes model provides a production-relevant baseline for the type of quantum speedups that can be expected for option pricing problems. 

To address the limitations of the Black-Scholes model, a variety of more sophisticated models have been proposed with the addition of local~\cite{dupire1994pricing,derman1994riding,cox1975notes} or stochastic~\cite{heston1993,hagan2002managing} volatility terms.
We leave the extension of our techniques to these models to a future study.

\subsection{Practicalities of Karhunen-Lo\`{e}ve expansion in option pricing}

The Karhunen-Lo\`{e}ve-expansion-based approach to option pricing relies on knowledge of the eigenvectors and eigenvalues for the stochastic process. 
Although we have seen above that the Brownian motion has a simple KL expansion, it is not the case for a general stochastic process. The Kosambi-Karhunen-Lo\`{e}ve theorem (Theorem~\ref{thm:kkl}) applies generally to square-integrable stochastic processes, however, it may be difficult to obtain explicit solutions for the eigenvalues and eigenfunctions of every stochastic process that has a KL expansion. 
In other words, we cannot always obtain analytical solutions to the Fredholm integral equation 
\al {
\label{eq:cov} \int_{0}^{T} k_{X}(s,t) f(s)= \lambda f(t),
} 
where $k_{X}(s,t)=\mathrm{Cov}(X_s, X_t)$.
Therefore one often has to resort to numerical solutions. 

In addition to the Brownian motion, other examples where analytical solutions to Equation~\eqref{eq:cov} exist include the Brownian bridge, the fractional Brownian motion, the $m$-integrated Brownian motion, the Ornstein-Uhlenbeck process, the Ornstein-Uhlenbeck bridge, the Poisson homogeneous process and Gaussian processes with exponential and squared-exponential covariance kernels. In many of these cases, the eigenvectors are known functions like the sinusoidal functions or the Hermite polynomials, thus quantum pricing algorithms for these processes can be developed using analog or semi-digital encodings and the extent of quantum speedups can be analyzed precisely.

For a stochastic process that has a KL expansion with a known truncation index, Theorem~\ref{thm:sub_s}, links the performance of the time-domain sub-sampling algorithm with that of the KL-expansion-based algorithm.
We would like to note that Theorem~\ref{thm:sub_s} makes use of Assumption~\ref{asm:bounded-eigenfunctions}, which states that the eigenvalue times the squared Lipschitz constant of the corresponding eigenvector is bounded by a constant. 
This suggests an interesting open problem, that is to investigate stochastic processes with known KL expansion eigenvalues and eigenfunctions, but violate Assumption~\ref{asm:bounded-eigenfunctions}. 
This could potentially result in a worse running time than polylogarithmic for the time-domain sub-sampling approach, and hence opens up opportunities for larger speedups for quantum algorithms using the semi-digital encoding.

\subsection{Multi-level Monte Carlo}
\label{sec:mlmc}

The problem of pricing path-dependent options, such as Asian options as considered herein, using Monte Carlo methods involves a two-level sampling process. Specifically, the Monte Carlo method consists of a sampling in time nested in another sampling across realizations (i.e., sample paths) of the stochastic process.
As a result, such processes may benefit from variance reduction techniques such as multi-level Monte Carlo (MLMC)~\cite{giles2008multilevel}. 
Under certain conditions on the weak and strong orders of convergence of the time-domain sub-sampling scheme with an increasing number of time points, MLMC may reduce the sampling complexity of the two nested Monte Carlo estimators to that of a single one up to a polylogarithmic factor in the error tolerance.
A quantized version of the MLMC method has also been developed with similar speedups for quantum Monte Carlo mean estimation, under analogous assumptions on the convergence order of the time-domain sub-sampling scheme~\cite{an2021qmlmc}.

We note that with a few simple modifications, MLMC should also work for the KL-expansion-based pricing algorithms.
Specifically, at each numerical approximation level $l$, the approximation $P_l$ for the payoff may be constructed using a KL expansion with a truncation index of $L=2^l$. Then samples for the independent estimators $Y_l = P_l - P_{l-1}$ may be generated by sharing the randomness in the $2^{l-1}$ highest order terms of $P_l$ and $P_{l-1}$.
We leave the detailed analysis of using MLMC with the time-domain sampling and KL-expansion approaches to a future study.

\subsection*{Acknowledgements}

A.P. thanks Adam Bouland for many helpful discussions. 
Y.S., S.C., D.H., N.K., S.H.S, and M.P would like to thank Ruslan Shaydulin for carefully checking the manuscript and providing feedback, and their colleagues at Global Technology Applied Research of JPMorgan Chase for support and helpful discussions.

\bibliographystyle{unsrt}
\bibliography{references}

\section*{}
\subsection*{Disclaimer}
This paper was prepared for informational purposes with contributions from the Global Technology Applied Research center of JPMorgan Chase \& Co. This paper is not a product of the Research Department of JPMorgan Chase \& Co. or its affiliates. Neither JPMorgan Chase \& Co. nor any of its affiliates makes any explicit or implied representation or warranty and none of them accept any liability in connection with this paper, including, without limitation, with respect to the completeness, accuracy, or reliability of the information contained herein and the potential legal, compliance, tax, or accounting effects thereof. This document is not intended as investment research or investment advice, or as a recommendation, offer, or solicitation for the purchase or sale of any security, financial instrument, financial product or service, or to be used in any way for evaluating the merits of participating in any transaction.

\end{document}